\newcommand{\avec}{{\bf a}}
\newcommand{\xvec}{{\bf x}}
\newcommand{\yvec}{{\bf y}}
\newcommand{\lnon}{\overline}
\newtheorem{proposition}{Proposition}
\newtheorem{theorem}{Theorem}
\newtheorem{remark}{Remark}
\newtheorem{example}{Example}
\newtheorem{definition}{Definition}
\newenvironment{proof}{{\noindent \em Proof.}}{\hfill $\fbox{}$ \vspace*{5mm}}
\begin{document}

\title{On the Number of Observation Nodes in Boolean Networks}

\author{Liangjie Sun, Wai-Ki Ching, Tatsuya Akutsu}
\date{}

\maketitle
\begin{abstract}
A Boolean network (BN) is called observable if any initial state can be uniquely determined from the output sequence. In the existing literature on observability of BNs, there is almost no research on the relationship between the number of observation nodes and the observability of BNs, which is an important and practical issue. In this paper, we mainly focus on three types of BNs with $n$ nodes (i.e., $K$-AND-OR-BNs, $K$-XOR-BNs, and $K$-NC-BNs, where $K$ is the number of input nodes for each node
and NC means nested canalyzing) and study the upper and lower bounds of the number of observation nodes for these BNs. First, we develop a novel technique using information entropy to derive a general lower bound of the number of observation nodes, and conclude that the number of observation nodes cannot be smaller than $\left[(1-K)+\frac{2^{K}-1}{2^{K}}\log_{2}(2^{K}-1)\right]n$ to ensure that any $K$-AND-OR-BN is observable, and similarly, some lower bound is also obtained for $K$-NC-BNs. Then for any type of BN, we also develop two new techniques to infer the general lower bounds, using counting identical states at time 1 and counting the number of fixed points, respectively. On the other hand, we derive nontrivial upper bounds of the number of observation nodes by combinatorial analysis of several types of BNs. Specifically, we indicate that $\left(\frac{2^{K}-K-1}{2^{K}-1}\right)n,~1$, and $\lceil \frac{n}{K}\rceil$ are the best case upper bounds for $K$-AND-OR-BNs, $K$-XOR-BNs, and $K$-NC-BN, respectively.\\
{\bf Keywords:} Boolean networks, observability, entropy.
\end{abstract}

\section{Introduction}
The Boolean network (BN), proposed by Kauffman \cite{kauffman1969metabolic}, is a discrete mathematical model for describing and analyzing gene regulatory networks. In this model, the state of a node can be expressed by a binary variable: $1$ (active)
or $0$ (inactive), at each time step. Moreover, the regulatory interactions between nodes can be described by Boolean functions, which determine the state of a node at the next time instant through the state of other nodes. It is worth mentioning that there is a special type of Boolean function called canalyzing function, which is characterized by the presence of at least one input such that the value of the function is determined for one of the two values of that input, regardless of the values of the other inputs.  The nested canalyzing function is a subclass of canalyzing functions. Notably, from the existing literature \cite{Harris2002,Kauffman2004}, many of the biologically relevant Boolean functions seem to belong to the canalyzing function class. Furthermore, Boolean control networks (BCNs) were proposed in \cite{Ideker} by taking the influence of environmental factors on gene regulatory networks as control inputs. To date, a range of meaningful results on BNs and BCNs have emerged, see for instance, \cite{li2024,lin2024,wang2024,yu2024}.

The property of a BN that enables to distinguish the entire system initial state from its outputs $\yvec(0),\ldots,\yvec(N)$ for some $N>0$, where $\yvec(t)=[y_{1}(t),\ldots,y_{m}(t)]$, is called observability. We say that $y_{1}(t),\ldots,y_{m}(t)$ are the observation nodes and $m$ is the number of observation nodes. Recently, many observability problems have been studied \cite{lin2024,Laschov2013,wang2024,Zhou2019,yu2020,zhang2022,zhang2023}. Among these problems, the minimal observability problem is particularly striking, which is how to add a minimal number of measurements to make an unobservable BN observable. In \cite{Weiss2018}, a necessary and sufficient condition for observability of conjunctive BNs was derived by using the dependency graph, and then an algorithm for solving the minimal observability problem for a conjunctive BN was proposed. Later, they used a similar graph-theoretic approach to derive a sufficient (but not necessary) condition for observability of general BCNs in \cite{Weiss20182}. Based on the sufficient condition for observability, two sub-optimal algorithms for solving the minimal observability problem for a general BCNs were proposed. On the other hand, the observability decomposition problem has also received extensive attention, which is whether a BN can be divided into an observable subsystem and an unobservable subsystem. In \cite{li2022TAC}, a necessary and sufficient vertex partition condition for the observability decomposition was obtained, and then an algorithm for checking the realizability of observability decomposition of a BCN was proposed. This algorithm not only separates the unobservable part from the considered BCN as much as possible, but also ensures that the remaining part is observable. After that, in \cite{li2024TAC}, the realizability of observability decomposition of BCNs under different definitions of observability was studied based on the observability criteria and some matrix discriminants. Furthermore, the strongly structurally observable graphs were purposed in \cite{zhu2024} as one of fundamental characterizations of observable discrete-time iterative systems, particularly for Boolean networks, from the perspective of network structures. Moreover, the minimal node control problem was solved in a polynomial amount of time to establish the explicit relationship between control nodes to observable paths.

In general, the number of observation nodes plays an important role in the observability of BNs, but to our knowledge, there is no theoretical study on the relationship between the number of observation nodes and the observability of BNs. In this paper, we mainly focus on three types of BNs: (i) $K$-AND-OR-BNs where each Boolean function is either AND or OR of $K$ literals, (ii) $K$-XOR-BNs where each Boolean function includes XOR operations only of $K$ literals, and (iii) $K$-NC-BNs where each Boolean function is a nested canalyzing function of $K$ literals, and we study the upper and lower bounds of the number of observation nodes for these BNs. To be more specific, in this paper, we consider the bounds shown in Table \ref{table1}, where BNs are restricted to some specified class, and $b$ denotes the corresponding bound.
\begin{table}[ht!]
  \centering
  \caption{The meaning of each bound.}
  \resizebox{\linewidth}{!}{
  \label{table1}
  \begin{tabular}{|c|p{12cm}|}
   \hline
   General lower bound & For any BN, the number of observation nodes is not smaller than $b$. \\ \hline
   Best case upper bound & For some BN, the minimum number of observation nodes is at most $b$, which is currently the smallest. \\ \hline
   Worst case lower bound & For some BN, the number of observation nodes is not smaller than $b$, which is currently the largest. \\ \hline
   General upper bound & For any BN, the minimum number of observation nodes is at most $b$. \\ \hline
  \end{tabular}}
\end{table}
Note that in general, the following relation holds: general lower bound $\leq$ best case upper bound $<$ worst case lower
bound $\leq$ general upper bound. If general lower bound $=$ best case upper bound (resp., worst case
lower bound $=$ general upper bound), these bounds are called tight or optimal. Since $n$ is a trivial general upper bound, the worst case lower bound is tight if it is $n$.

The results are summarized in Table \ref{table2}, where $COUNT(\avec^{j})$ and $l$ represent the total number of occurrences of $\avec^{i}$ in $\{\xvec(1;\xvec(0)),\xvec(0)\in\{0,1\}^{n}\}$ and the number of fixed points in a BN, respectively, $K$ and $m$ denote the number of input nodes and observation nodes, respectively, $\beta_{K}=-\left[\frac{2^{K-1}-1}{2^K}\log_{2}\left(\frac{2^{K-1}-1}{2^{K}}\right)+\frac{2^{K-1}+1}{2^K}\log_{2}\left(\frac{2^{K-1}+1}{2^{K}}\right)\right]$, and $n$ is the number of nodes in a BN.

First, for $K$-AND-OR-BNs, a novel technique for deriving a general lower bound of the number of observation nodes using information entropy of BNs is developed. Here, for 2-AND-OR-BNs, the general lower bound $0.188n$ is obtained in Theorem \ref{theorem1}. Then we extend the above result to $K$-AND-OR-BNs ($K>2$) and find that $m\geq \left[(1-K)+\frac{2^{K}-1}{2^{K}}\log_{2}(2^{K}-1)\right]n$ must hold so that a $K$-AND-OR-BN ($K>2$) is observable in Theorem \ref{theorem2}. Based on Proposition \ref{proposition1} and Proposition \ref{proposition5}, we can see that the worst case lower bound for $K$-AND-OR-BNs is always $n$. Later, we also present a new technique by counting identical states $\xvec(1;\xvec(0))$ to study the general lower bound of the number of observation nodes, which is applicable to any type of BN, and we show that $\log_{2}[\max_{\avec^{j}}COUNT(\avec^{j})]$ is the general lower bound for any type of BN in Proposition \ref{proposition4}. Moreover, by counting the number of fixed points, a general lower bound $\log_{2}l$ for any type of BN is derived in Proposition \ref{proposition9}.
At last, in Proposition \ref{proposition}, we show that for any $K$-NC-BN, the number of observation nodes is not smaller than $(1-\beta_{K})n$, where $\beta_{K}=-\left[\frac{2^{K-1}-1}{2^K}\log_{2}\left(\frac{2^{K-1}-1}{2^{K}}\right)+\frac{2^{K-1}+1}{2^K}\log_{2}\left(\frac{2^{K-1}+1}{2^{K}}\right)\right]$.

On the other hand, we derive nontrivial upper bounds of the number of observation nodes by combinatorial analysis of several types of BNs. Concretely speaking, in Proposition \ref{proposition2} and Theorem \ref{theorem}, we infer that $\left(\frac{2^{K}-K-1}{2^{K}-1}\right)n$ is the best case upper bound for $K$-AND-OR-BNs. For $K$-XOR-BNs, we notice that there exists a $K$-XOR-BN that is observable with $m=K$, where $n=K+1$ in Proposition \ref{proposition8}, and that there exists a $K$-XOR-BN that is observable with $m=\frac{K}{K+1}n$, where $K$ is odd and $(n\mod K+1)=0$ in Proposition \ref{proposition10}. In addition, the best case upper bound $m=1$ for $K$-XOR-BN is shown in Proposition \ref{proposition11}. Finally, in Theorem \ref{theorem5}, we show that there exists a specific $K$-NC-BN, which is observable with $m=\lceil \frac{n}{K}\rceil$.
Here, we find that the upper bound for $K$-NC-BNs can break the lower bound for $K$-AND-OR-BNs, which implies that there exist $K$-NC-BNs that are easier to
observe than any other $K$-AND-OR-BNs.

\begin{table}[ht!]
  \centering
  \caption{Summary of results.}
\label{table2}
\resizebox{\linewidth}{!}{
\begin{tabular}{|l|c|c|c|}
   \hline
  & type & observation nodes ($m$)  & characteristic\\ \hline
  Theorem \ref{theorem1} & 2-AND-OR-BNs & $0.188n$ & General lower bound using information entropy of BNs\\ \hline
  Proposition \ref{proposition1} & 2-AND-OR-BNs & $n$ & Worst case lower bound\\ \hline
  Proposition \ref{proposition2} & 2-AND-OR-BNs & $\frac{n}{3}$ & Best case upper bound, where the general upper bound is always $n$ \\ \hline
  Proposition \ref{proposition4} & Any BNs & $\log_{2}[\max_{\avec^{j}}COUNT(\avec^{j})]$ & General lower bound by counting identical states $\xvec(1;\xvec(0))$ \\ \hline
  Theorem \ref{theorem2} & $K$-AND-OR-BNs,~$K>2$ & $[(1-K)+\frac{2^{K}-1}{2^{K}}\log_{2}(2^{K}-1)]n$ & General lower bound using information entropy of BNs  \\ \hline
  Proposition \ref{proposition5} & $K$-AND-OR-BNs,~$K>2$ & $n$ & Worst case lower bound\\ \hline
  Theorem \ref{theorem} & $K$-AND-OR-BNs,~$K>2$ & $\left(\frac{2^{K}-K-1}{2^{K}-1}\right)n$ & Best case upper bound, where the general upper bound is always $n$ \\ \hline
  Proposition \ref{proposition7} & $2$-XOR-BNs & $1$ & Best case upper bound, where the general upper bound is always $n$ \\ \hline
  Remark \ref{remark4} & $2$-XOR-BNs & $n$ & Worst case lower bound \\ \hline
  Proposition \ref{proposition8} & $K$-XOR-BNs,~$K>2,n=K+1$ & $K$ & Best case upper bound \\ \hline
  Proposition \ref{proposition9} & Any BNs & $\log_{2}l$ & General lower bound by counting the number of fixed points \\ \hline
  Proposition \ref{proposition10} & $K$-XOR-BNs,~$K>2$ & $\frac{K}{K+1}n$ & Best case upper bound \\ \hline
  Proposition \ref{proposition11} & $K$-XOR-BNs,~$K>2$ & $1$ & Best case upper bound, where the general upper bound is always $n$ \\ \hline
  Theorem \ref{theorem5} & $K$-NC-BNs,~$K>2$ & $\lceil \frac{n}{K}\rceil$ & Best case upper bound, where the general upper bound is always $n$ \\ \hline
  Proposition \ref{proposition} & $K$-NC-BNs,~$K>2$ & $(1-\beta_{K})n$ & General lower bound using information entropy of BNs \\ \hline
\end{tabular}}
\end{table}

\section{Preliminaries}
For two integers $a$ and $b$ with $a<b$, let $[a,b]:=\{a,a+1,\ldots,b\}$.

Consider the following synchronous BN:
\begin{eqnarray*}
x_{i}(t+1) & = & f_{i}(x_{i_{1}}(t),\ldots,x_{i_{k_{i}}}(t)),\quad \forall i\in[1,n],\\
y_{j}(t) & = & x_{\phi(j)}(t),\quad \forall j\in[1,m],
\end{eqnarray*}
where every $x_{i}(t)$ and $y_{j}(t)$ takes values in $\{0,1\}$, $f_{i}:\{0,1\}^{k_{i}}\rightarrow\{0,1\}$ is a Boolean function and $\phi(j)$ is a function from $[1,m]$ to $[1,n]$.\footnote{$y_{j}(t)$ can be determined as $y_{j}(t)=g_{j}(x_{1}(t),\ldots,x_{n}(t))$.} Here, we say that $x_{i_{1}},\ldots,x_{i_{k_{i}}}\in\{x_{1},\ldots,x_{n}\}$ is relevant, which means that for each $i_{j}\in\{i_{1},\ldots,i_{k_{i}}\}$, there exists $[b_{1},\ldots,b_{i_{k_{i}}}]\in\{0,1\}^{k_{i}}$ such that  $f_{i}(b_{i_{1}},\ldots,b_{i_{j}},\ldots,b_{i_{k_{i}}})\neq f_{i}(b_{i_{1}},\ldots,\overline{b_{i_{j}}},\ldots,b_{i_{k_{i}}})$, where $\overline{b_{i_{j}}}$ means the negation (NOT) of $b_{i_{j}}$.

Denote the state of a BN at time $t$ by $\xvec(t):=[x_{1}(t),\ldots,x_{n}(t)]$ and the output by $\yvec(t):=[y_{1}(t),\ldots,y_{m}(t)]$. The state of a BN with initial state $\xvec(0)$ at time $t$, is expressed as $\xvec(t;\xvec(0))$, and the corresponding output is expressed as $\yvec(t;\xvec(0))$. Moreover, $\xvec(t;\xvec(0),x_{i}(t)=j)$ represents the state of a BN with initial state $\xvec(0)$ and $x_{i}(t)=j$ at time $t$.

\begin{definition}
A BN is observable on $[0,N]$ if any two different initial states $\xvec^{1}(0),\xvec^{2}(0)\in\{0,1\}^{n}$ are distinguishable on the time interval $[0,N]$.
\end{definition}
In other words, a BN is observable on $[0,N]$, if any two different initial states $\xvec^{1}(0)$ and $\xvec^{2}(0)$ yield different output sequences $\yvec(0;\xvec^{1}(0)),\ldots,\yvec(N;\xvec^{1}(0))$ and $\yvec(0;\xvec^{2}(0)),\ldots,\yvec(N;\xvec^{2}(0))$, which means that for given $\yvec(t)=[y_{1}(t),\ldots,y_{m}(t)],~t\in[0,N]$, the initial state $\xvec(0)$ can always be uniquely determined. We say that $y_{1}(t),\ldots,y_{m}(t)$ are the observation nodes and $m$ is the number of observation nodes. We also say that a BN is observable if the BN is observable on $[0,N]$ for some $N$.

Assuming there are $r$ different states $\xvec(1;\xvec(0))$ for all $2^{n}$ initial states $\xvec(0)$, and let $\avec^{1},\ldots,\avec^{r}$
be the different states, i.e., $\{\xvec(1;\xvec(0)):\xvec(0)\in\{0,1\}^{n}\}=\{\avec^{1},\ldots,\avec^{r}\}$. $COUNT(\avec^{j}:\xvec(1;\xvec(0)),\xvec(0)\in\{0,1\}^{n})$ represents the total number of occurrences of $\avec^{i}$ in $\{\xvec(1;\xvec(0)),\xvec(0)\in\{0,1\}^{n}\}$. Without any confusion, we remove the range $\{\xvec(1;\xvec(0)),\xvec(0)\in\{0,1\}^{n}\}$ and use $COUNT(\avec^{j})$. Here,
\begin{eqnarray*}
\sum_{j=1}^{r}COUNT(\avec^{j})=2^{n}.
\end{eqnarray*}

\begin{example}\label{example1}
Consider the following BN:
\begin{eqnarray*}
x_{1}(t+1) & = & x_{1}(t)\wedge x_{3}(t),\\
x_{2}(t+1) & = & \overline{x_{1}(t)}\wedge x_{3}(t),\\
x_{3}(t+1) & = & x_{1}(t)\wedge x_{2}(t).
\end{eqnarray*}
Then, we have the following state transition table (Table \ref{table3}).
\begin{table}[ht!]
  \centering
  \caption{State transition table in Example \ref{example1}.}
  \label{table3}
\begin{tabular}{|ccc|ccc|}
\hline
$x_{1}(0)$&$x_{2}(0)$&$x_{3}(0)$&$x_{1}(1)$&$x_{2}(1)$&$x_{3}(1)$\\
\hline
$0$&$0$&$0$&$0$&$0$&$0$\\
$0$&$0$&$1$&$0$&$1$&$0$\\
$0$&$1$&$0$&$0$&$0$&$0$\\
$0$&$1$&$1$&$0$&$1$&$0$\\
$1$&$0$&$0$&$0$&$0$&$0$\\
$1$&$0$&$1$&$1$&$0$&$0$\\
$1$&$1$&$0$&$0$&$0$&$1$\\
$1$&$1$&$1$&$1$&$0$&$1$\\
\hline
\end{tabular}
\end{table}
There are $5$ different states $\xvec(1;\xvec(0))$ for all initial states $\xvec(0)$, which are
\begin{eqnarray*}
[0,0,0],[1,0,0],[0,1,0],[0,0,1],[1,0,1].
\end{eqnarray*}
Here, $COUNT([0,0,0])=3,COUNT([1,0,0])=1,COUNT([0,1,0])=2,COUNT([0,0,1])=1,COUNT([1,0,1])=1$.
\end{example}

\begin{example}\label{example2}
Consider the following BN:
\begin{eqnarray*}
x_{1}(t+1) & = & \overline{x_{2}(t)}\wedge x_{3}(t),\\
x_{2}(t+1) & = & x_{2}(t)\wedge x_{3}(t),\\
x_{3}(t+1) & = & \overline{x_{2}(t)}\wedge\overline{x_{3}(t)},\\
x_{4}(t+1) & = & x_{1}(t)\wedge x_{4}(t).
\end{eqnarray*}
Then, we have the following state transition table (Table \ref{table4}).
\begin{table}[ht!]
  \centering
  \caption{State transition table in Example \ref{example2}.}
  \label{table4}
\begin{tabular}{|cccc|cccc|}
\hline
$x_{1}(0)$&$x_{2}(0)$&$x_{3}(0)$&$x_{4}(0)$&$x_{1}(1)$&$x_{2}(1)$&$x_{3}(1)$&$x_{4}(1)$\\
\hline
$0$&$0$&$0$&$0$&$0$&$0$&$1$&$0$\\
$0$&$0$&$0$&$1$&$0$&$0$&$1$&$0$\\
$0$&$0$&$1$&$0$&$1$&$0$&$0$&$0$\\
$0$&$0$&$1$&$1$&$1$&$0$&$0$&$0$\\
$0$&$1$&$0$&$0$&$0$&$0$&$0$&$0$\\
$0$&$1$&$0$&$1$&$0$&$0$&$0$&$0$\\
$0$&$1$&$1$&$0$&$0$&$1$&$0$&$0$\\
$0$&$1$&$1$&$1$&$0$&$1$&$0$&$0$\\
$1$&$0$&$0$&$0$&$0$&$0$&$1$&$0$\\
$1$&$0$&$0$&$1$&$0$&$0$&$1$&$1$\\
$1$&$0$&$1$&$0$&$1$&$0$&$0$&$0$\\
$1$&$0$&$1$&$1$&$1$&$0$&$0$&$1$\\
$1$&$1$&$0$&$0$&$0$&$0$&$0$&$0$\\
$1$&$1$&$0$&$1$&$0$&$0$&$0$&$1$\\
$1$&$1$&$1$&$0$&$0$&$1$&$0$&$0$\\
$1$&$1$&$1$&$1$&$0$&$1$&$0$&$1$\\
\hline
\end{tabular}
\end{table}
There are $8$ different states $\xvec(1;\xvec(0))$ for all initial states $\xvec(0)$, which are
\begin{eqnarray*}
[0,0,0,0],[1,0,0,0],[0,1,0,0],[0,0,1,0],[0,0,0,1],[1,0,0,1],[0,1,0,1],[0,0,1,1].
\end{eqnarray*}
Here, $COUNT([0,0,0,0])=COUNT([1,0,0,0])=COUNT([0,1,0,0])=COUNT([0,0,1,0])=3$ and
$COUNT([0,0,0,1])=COUNT([1,0,0,1])=COUNT([0,1,0,1])=COUNT([0,0,1,1])=1$.
\end{example}

A BN is called a $K$-AND-OR-BN if each Boolean function $f_{i}$ is either AND or OR of $K$ literals, and a BN is called a $K$-XOR-BN if every Boolean function $f_{i}$ includes XOR operations only of $K$ literals. Note that in a $K$-AND-OR-BN, some nodes can be AND nodes and some nodes
can be OR nodes. In addition, we refer to a BN as a $K$-NC-BN, if its Boolean function is a nested canalyzing function of $K$ literals. The nested canalyzing function, which is reported to be biologically significant, is defined as follows.

\begin{definition}\cite{akutsu11}
A Boolean function is nested canalyzing over $x_{1},\ldots,x_{n}$, if and only if it can be represented as
\begin{eqnarray*}
f=\ell_{1}\vee\cdots\vee\ell_{k_{1}-1}\vee(\ell_{k_{1}}\wedge\cdots\wedge\ell_{k_{2}-1}\wedge(\ell_{k_{2}}\vee\cdots\vee\ell_{k_{3}-1}\vee(\cdots))),
\end{eqnarray*}
where $\ell_{i}\in\{x_{1},\overline{x_{1}},x_{2},\overline{x_{2}},\ldots,x_{n},\overline{x_{n}}\}$ and $1\leq k_{1}<k_{2}<\cdots$.
\end{definition}

Then we recall the definition of information entropy.
\begin{definition}
Let $p_{i},~i\in[1,2^{n}]$ be the probability that the $i$-th event occurs in a given BN, where $0\leq p_{i}\leq1$ for all $i\in[1,2^{n}]$ and $\sum_{i=1}^{2^{n}}p_{i}=1$. Then, the entropy of the BN is given by $-\sum_{i=1}^{2^{n}}p_{i}\log_{2}p_{i}$, where $0\log_{2}0=0$.
\end{definition}
For more details on the information entropy of BNs, please refer to \cite{guo22}.

In this paper, we study the upper and lower bounds of $m$ so that any initial state $\xvec(0)$ can be determined from $\yvec(0),\ldots,\yvec(N)$ for some $N>0$ (i.e., the BN is observable), where $\yvec(t)=[y_{1}(t),\ldots,y_{m}(t)]$.
\section{Boolean networks consisting of AND or OR functions}
In this section, we study the upper and lower bounds of the number of observation nodes $m$ in BNs consisting of AND or OR functions.

For the simplicity, we first focus on the case where each $f_{i},~i\in[1,n]$ is AND or OR of two literals. We call these BNs 2-AND-OR-BNs.

Using information theoretic analysis on BNs \cite{guo22}, the following result is obtained.

\begin{theorem}\label{theorem1}
For any 2-AND-OR-BN, $m\geq0.188n$ must hold so that the BN is observable.
\end{theorem}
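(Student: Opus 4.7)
The plan is to run an information-theoretic argument on the initial state under the uniform distribution. Place the uniform measure on $\xvec(0)\in\{0,1\}^{n}$ so that $H(\xvec(0))=n$. Observability implies that the map $\xvec(0)\mapsto(\yvec(0),\ldots,\yvec(N))$ is injective, so the output sequence is itself uniform over $2^{n}$ values and
\begin{equation*}
H(\yvec(0),\yvec(1),\ldots,\yvec(N))=n.
\end{equation*}

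Next I would collapse the right-hand side using the deterministic dynamics. Because $\xvec(t)$ for $t\geq 1$ is a deterministic function of $\xvec(1)$ and each $\yvec(t)$ is a coordinate projection of $\xvec(t)$, the tail $(\yvec(1),\ldots,\yvec(N))$ is a deterministic function of $\xvec(1)$; hence $(\yvec(0),\yvec(1),\ldots,\yvec(N))$ is a deterministic function of $(\yvec(0),\xvec(1))$. Combining this observation with the trivial bound $H(\yvec(0))\leq m$ and subadditivity of joint entropy over components gives
\begin{equation*}
n \;\leq\; H(\yvec(0),\xvec(1)) \;\leq\; H(\yvec(0))+H(\xvec(1)) \;\leq\; m+\sum_{i=1}^{n}H(x_{i}(1)).
\end{equation*}

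The quantitative core is the per-node entropy bound. Since each $f_{i}$ is AND or OR of two literals over distinct variables (as forced by the relevance condition), and negating a uniform $\mathrm{Bernoulli}(1/2)$ coordinate preserves its distribution, the two inputs feeding $f_{i}$ are independent $\mathrm{Bernoulli}(1/2)$. Exactly one of the four input patterns yields the minority output value ($1$ for AND, $0$ for OR), so $P(x_{i}(1)=1)\in\{1/4,\,3/4\}$ and in either case
\begin{equation*}
H(x_{i}(1)) \;=\; H\!\left(\tfrac{1}{4},\tfrac{3}{4}\right) \;=\; 2-\tfrac{3}{4}\log_{2}3.
\end{equation*}
Substituting into the preceding display and rearranging yields $m\geq n\left(-1+\tfrac{3}{4}\log_{2}3\right)\approx 0.188n$.

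The main obstacle is justifying the telescoping step $H(\yvec(0),\ldots,\yvec(N))\leq H(\yvec(0),\xvec(1))$, which I would handle via the identity $H(U,V)=H(U)$ whenever $V$ is a deterministic function of $U$, applied together with the Markov property of the BN update. One also has to be careful that the per-node marginals $H(x_{i}(1))$ really do hit the value $H(1/4,3/4)$ uniformly over choices of literals and of AND versus OR; once these two points are addressed, the remainder of the proof is a mechanical binary-entropy computation at $p=1/4$.
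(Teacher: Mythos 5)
Your proposal is correct and follows essentially the same route as the paper: a uniform prior on $\xvec(0)$, the observation that everything after time $0$ is determined by $\xvec(1)$, the per-node entropy bound $H(x_i(1))=2-\tfrac{3}{4}\log_2 3\approx 0.811$ for an AND/OR of two literals, and subadditivity to get $n\leq m+0.812n$. Your write-up is in fact a more explicit rendering of the entropy chain that the paper compresses into its appeal to Theorem 12 of the cited reference.
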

\begin{proof}
Since we need to discriminate all $2^{n}$ initial states $\xvec(0)$ from time series data of $\yvec$, we need the information quantity of $n$ bits from $[\yvec(0),\ldots,\yvec(N)]$.

Consider the case of $x_{1}(t+1)=x_{1}(t)\wedge x_{2}(t)$, where the same result should hold for any AND or OR functions of two literals. Then, the state transitions from $\xvec(0)$ to $x_{1}(1)$ are given as below (Table \ref{table5}).
\begin{table}[ht!]
  \centering
  \caption{State transition table under the case of $x_{1}(t+1)=x_{1}(t)\wedge x_{2}(t)$.}
  \label{table5}
\begin{tabular}{|cc|c|}
\hline
$x_{1}(0)$&$x_{2}(0)$&$x_{1}(1)$\\
\hline
$0$&$0$&$0$\\
$0$&$1$&$0$\\
$1$&$0$&$0$\\
$1$&$1$&$1$\\
\hline
\end{tabular}
\end{table}

Assuming that all four inputs are given with probability $\frac{1}{4}$, the entropy of $x_{1}(1)$ is
\begin{eqnarray*}
-\left(\frac{3}{4}\log_{2}\frac{3}{4}+\frac{1}{4}\log_{2}\frac{1}{4}\right) & \approx & 0.812.
\end{eqnarray*}
It is seen from Theorem 12 of \cite{guo22} that in total, the information
quantify of $\xvec(1)$ is at most $0.812n$ bits, which might be obtained by $[\yvec(0),\ldots,\yvec(N)]$. However, in order to observe any initial state $\xvec(0)$, we should have the information quantity of $n$ bits. Therefore, at least $n-0.812n=0.188n$ bits must be provided from $\yvec(0)$, which implies $m\geq0.188n$.
\end{proof}

Note that Theorem \ref{theorem1} gives a lower bound for the best case. In the worst case, we need $m=n$ observation nodes.

\begin{proposition}\label{proposition1}
In the worst case, we need $m=n$ observation nodes for 2-AND-OR-BN so that this BN is observable.
\end{proposition}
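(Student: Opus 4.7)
The plan is to exhibit a single 2-AND-OR-BN on $n$ nodes for which omitting any one node from the set of observation nodes destroys observability; this witnesses the worst-case lower bound $m\ge n$. Concretely, I would take the cyclic conjunctive BN given by $x_{i}(t+1)=x_{i}(t)\wedge x_{i+1}(t)$ for $i\in[1,n]$, with indices read modulo $n$. Every transition function is an AND of two literals and depends nontrivially on both inputs (flipping $x_{i}$ changes the output when $x_{i+1}=1$, and conversely), so the construction is a legitimate 2-AND-OR-BN.

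Next I would fix an index $i$ and compare the two initial states $\xvec^{1}(0)=(0,\ldots,0)$ (all zeros) and $\xvec^{2}(0)$ (all zeros except $x_{i}(0)=1$). At time $0$ they agree on every coordinate except $x_{i}$, so dropping $x_{i}$ from the observation set makes their $t=0$ outputs identical. At time $1$ I would compute coordinatewise: $\xvec^{1}(1)=(0,\ldots,0)$ trivially, and for $\xvec^{2}(1)$ one has $x_{i-1}(1)=0\wedge 1=0$, $x_{i}(1)=1\wedge 0=0$, and $x_{j}(1)=0\wedge 0=0$ for every remaining $j$. Hence $\xvec^{2}(1)=(0,\ldots,0)$ as well, and from then on both trajectories are frozen at the all-zeros state; so the two distinct initial states produce identical output sequences for all $t\ge 0$ whenever $x_{i}$ is unobserved, contradicting observability. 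Running this argument for every $i\in[1,n]$ forces all $n$ nodes into any valid observation set, giving $m\ge n$; combined with the trivial $m=n$ upper bound, the worst-case lower bound is exactly $n$.

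The main obstacle is not really mathematical --- the verification is a short direct computation --- but rather the careful statement that ``the output sequences are identical'' must separately handle time $0$ (where agreement on the non-omitted coordinates is needed) and times $t\ge 1$ (where the deterministic cyclic-AND dynamics forces the two trajectories to coincide on every coordinate). Once this split is made explicit, the conclusion is immediate, and the same witness family $(\xvec^{1}(0),\xvec^{2}(0))$ works uniformly for all $i$.
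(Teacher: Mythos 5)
Your proof is correct and uses essentially the same strategy as the paper: exhibit an explicit 2-AND-OR-BN in which, for each node $x_i$, there is a pair of initial states that collapse to the same state at time $1$ and differ only in coordinate $i$ at time $0$, forcing every node into the observation set. The only difference is the choice of witness network (your symmetric cyclic conjunctive BN $x_i(t+1)=x_i(t)\wedge x_{i+1}(t)$ versus the paper's construction in which $x_n$ feeds into all other nodes), and your uniform treatment of all indices via the pair $(\mathbf{0},e_i)$ is, if anything, slightly cleaner.
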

\begin{proof}
Consider the following BN:
\begin{eqnarray*}
x_{1}(t+1) & = & x_{1}(t)\wedge x_{n}(t),\\
x_{2}(t+1) & = & x_{2}(t)\wedge x_{n}(t),\\
&\cdots&\\
x_{n-1}(t+1) & = & x_{n-1}(t)\wedge x_{n}(t),\\
x_{n}(t+1) & = & x_{n}(t)\wedge x_{1}(t).
\end{eqnarray*}

First, note that if $x_{n}(0)=0$, $x_{i}(1)=0$ holds for all $i\in[1,n]$. Thus, we need to know $x_{i}(0)$ for all $i\in[1,n-1]$. Next, note that if $x_{1}(0)=0$, $x_{n}(1)=0$ holds. Thus, we need to know $x_{n}(0)$. Accordingly, we need to know $x_{i}(0)$ for all nodes and thus $m=n$ must holds.
\end{proof}

On the other hand, we show that there exists a 2-AND-OR-BN with $m=\frac{1}{3}n$.

\begin{proposition}\label{proposition2}
There exists a 2-AND-OR-BN that is observable with $m=\frac{1}{3}n$, where $(n \mod 3)=0$.
\end{proposition}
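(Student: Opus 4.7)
The plan is to reduce the claim to constructing one three-node 2-AND-OR-BN that is observable from a single observation node, and then to tile $n/3$ disjoint copies of this block. Since the copies share no variables, the full $n$-node network is observable if and only if every block is, yielding $m = n/3$.

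For the three-node block on variables $(x_1,x_2,x_3)$ with observation node $x_1$, I would use
\[
x_1(t+1)=x_2(t)\wedge x_3(t),\quad x_2(t+1)=x_2(t)\vee x_3(t),\quad x_3(t+1)=x_2(t)\wedge\overline{x_3(t)},
\]
after checking that each update is AND or OR of two relevant literals. The key design choice is to let the subsystem $(x_2,x_3)$ evolve autonomously and to use $x_1$ purely as a ``readout'' whose value at time $t+1$ is a two-literal function of $(x_2(t),x_3(t))$. Since $x_1(0)$ is observed directly and $x_1(t)$ for $t\ge 1$ depends only on $(x_2(0),x_3(0))$, observability of the block reduces to showing that the four autonomous orbits of $(x_2,x_3)$ produce four distinct $x_1$-sequences. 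I would verify this by a brief enumeration: starting from $(x_2(0),x_3(0))\in\{(0,0),(0,1),(1,0),(1,1)\}$, the $x_1$-outputs from time one onward turn out to be $0,0,0,\ldots$; $0,0,1,0,1,\ldots$; $0,1,0,1,\ldots$; and $1,0,1,0,\ldots$, which are pairwise distinct.

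For general $n = 3k$, I would then take $k$ disjoint copies of the block above, using variables $(x_{3i-2},x_{3i-1},x_{3i})$ and the analogous updates for each $i\in[1,k]$, and observe $x_{3i-2}$ for every $i$. The main obstacle is really the three-node construction: every two-literal AND/OR function takes its ``minority'' value on exactly one of its four input assignments, so a naive choice of transition function collapses many initial states into indistinguishable $x_1$-trajectories. The decoupling trick—letting $(x_2,x_3)$ be autonomous and using $x_1$ only as a readout—circumvents this because any two distinct initial states of the block either already disagree in $x_1(0)$ or differ only in $(x_2(0),x_3(0))$ and are then separated by some later $x_1(t)$.
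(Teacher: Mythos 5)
Your proposal is correct and follows essentially the same route as the paper: both construct a three-node 2-AND-OR-BN in which $(x_2,x_3)$ evolves autonomously and $x_1$ serves as a single observed readout, verify by enumeration that the eight initial states yield distinct output sequences, and then tile $n/3$ disjoint copies. The only difference is the particular choice of two-literal functions (the paper uses $x_1(t+1)=\overline{x_2(t)}\wedge x_3(t)$, $x_2(t+1)=x_2(t)\wedge x_3(t)$, $x_3(t+1)=\overline{x_2(t)}\wedge\overline{x_3(t)}$), and your enumeration of the four $x_1$-trajectories checks out, so the argument is sound.
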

\begin{proof}
First, we consider the case of $n=3$ and construct the following BN:
\begin{eqnarray*}
x_{1}(t+1) & = & \overline{x_{2}(t)}\wedge x_{3}(t),\\
x_{2}(t+1) & = & x_{2}(t)\wedge x_{3}(t),\\
x_{3}(t+1) & = & \overline{x_{2}(t)}\wedge\overline{x_{3}(t)},\\
y(t) & = & x_{1}(t).
\end{eqnarray*}
Then, we have the following state transition table (Table \ref{table6}).
\begin{table}[ht!]
  \centering
  \caption{State transition table in Proposition \ref{proposition2}.}
  \label{table6}
\begin{tabular}{|ccc|ccc|ccc|ccc|}
\hline
$x_{1}(0)$&$x_{2}(0)$&$x_{3}(0)$&$x_{1}(1)$&$x_{2}(1)$&$x_{3}(1)$&$x_{1}(2)$&$x_{2}(2)$&$x_{3}(2)$&$x_{1}(3)$&$x_{2}(3)$&$x_{3}(3)$\\           \hline
$0$&$0$&$0$&$0$&$0$&$1$&$1$&$0$&$0$&$0$&$0$&$1$\\
$0$&$0$&$1$&$1$&$0$&$0$&$0$&$0$&$1$&$1$&$0$&$0$\\
$0$&$1$&$0$&$0$&$0$&$0$&$0$&$0$&$1$&$1$&$0$&$0$\\
$0$&$1$&$1$&$0$&$1$&$0$&$0$&$0$&$0$&$0$&$0$&$1$\\
$1$&$0$&$0$&$0$&$0$&$1$&$1$&$0$&$0$&$0$&$0$&$1$\\
$1$&$0$&$1$&$1$&$0$&$0$&$0$&$0$&$1$&$1$&$0$&$0$\\
$1$&$1$&$0$&$0$&$0$&$0$&$0$&$0$&$1$&$1$&$0$&$0$\\
$1$&$1$&$1$&$0$&$1$&$0$&$0$&$0$&$0$&$0$&$0$&$1$\\
\hline
\end{tabular}
\end{table}

Clearly, $\xvec(0)$ can be recovered from $y(0),y(1),y(2),y(3)$.

Next, we consider the case of $n>3$. In this case, it is enough to make $\frac{n}{3}$ copies of the above BN. Then, the number of observation nodes is $\frac{n}{3}\times1=\frac{1}{3}n$.
\end{proof}

Based on Theorem \ref{theorem1} and Proposition \ref{proposition2}, it is natural to further study whether there exists a 2-AND-OR-BN with $n>3$, which is observable with $m=1$, and then we have the following result.

\begin{proposition}\label{proposition3}
There does not exist any 2-AND-OR-BN with $n>3$ that is observable with $m=1$.
\end{proposition}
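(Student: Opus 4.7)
The plan is to combine Theorem~\ref{theorem1} with a structural argument about what observability forces at time~$1$. Suppose for contradiction that some 2-AND-OR-BN with $n>3$ is observable with $m=1$. Applying Theorem~\ref{theorem1} to $m=1$ gives $0.188\,n\leq 1$, hence $n\leq 5$; together with $n>3$, only the cases $n=4$ and $n=5$ remain.

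Fix the unique observation node $v:=\phi(1)$. A necessary condition for observability is that the map $\xvec(0)\mapsto(x_{v}(0),\,\xvec(1))$ be injective on $\{0,1\}^{n}$: two initial states that shared both $x_{v}(0)$ and $\xvec(1)$ would agree on $y(0)$ and, because $\xvec(1)$ determines $\xvec(t)$ for every $t\geq 1$, also on every subsequent $y(t)$, contradicting observability. Equivalently, each of the two restricted maps $F|_{x_{v}(0)=0}$ and $F|_{x_{v}(0)=1}$, viewed as a function $\{0,1\}^{n-1}\to\{0,1\}^{n}$, must be injective. For each $i$, the restriction $f_{i}|_{x_{v}(0)=0}$ falls into one of three categories depending on whether $v$ is an input of $f_{i}$ and on the form of $f_{i}$: (a) a constant $0$ or $1$ (when $f_{i}=x_{v}\wedge\cdot$ or $\overline{x_{v}}\vee\cdot$); (b) a single literal $\pm x_{b}$ (when $f_{i}=\overline{x_{v}}\wedge\cdot$ or $x_{v}\vee\cdot$); or (c) the unchanged 2-literal AND/OR (when $v$ is not an input of $f_{i}$). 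A constant coordinate carries no information about $\xvec_{\neq v}(0)$, so injectivity forces at most one constant restriction on each side. Since every $f_{i}$ with $v$ as an input contributes exactly one such constant (to one side or the other, determined by its form), the out-degree $D_{v}$ of $v$ in the dependency graph must satisfy $D_{v}\leq 2$.

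The main obstacle is completing the elimination once $D_{v}\leq 2$ is known, which I would handle by a case split on $D_{v}\in\{0,1,2\}$. The cleanest sub-case is $D_{v}=0$: here $F$ does not depend on $x_{v}$, so the required injectivity becomes injectivity of $(f_{v},F_{\mathrm{sub}})$ on $\{0,1\}^{n-1}$, where $F_{\mathrm{sub}}:=(f_{i})_{i\neq v}$ is a 2-AND-OR sub-BN on $n-1$ nodes. Since $f_{v}$ is $3{:}1$-biased, only two sub-initial states share its minority value, and within the remaining $2^{n-1}-2$ sub-initial states the map $F_{\mathrm{sub}}$ must itself be injective. This forces $|\mathrm{Image}(F_{\mathrm{sub}})|\geq 2^{n-1}-2$, i.e.\ $\geq 6$ when $n=4$ and $\geq 14$ when $n=5$; I would rule out each of these by a short counting argument exploiting the $3{:}1$ bias of every coordinate of $F_{\mathrm{sub}}$ (in the same spirit as the proof of Theorem~\ref{theorem1}). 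The cases $D_{v}=1,2$ are handled analogously by tracking which sub-BN coordinates are revealed by the literal restrictions on each side and then applying the same image-size obstruction to the residual sub-BN; this image-size lower bound is the most delicate ingredient and is where the bulk of the work lies.
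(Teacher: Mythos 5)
There is a genuine gap. Your setup is sound and even adds something the paper does not use: invoking Theorem~\ref{theorem1} with $m=1$ to reduce to $n\in\{4,5\}$, and formulating the necessary condition as injectivity of $\xvec(0)\mapsto(x_{v}(0),\xvec(1))$ (equivalently, injectivity of each restriction $F|_{x_v(0)=c}$), which is the same observation as the paper's Claim~1 and its generalization in Proposition~\ref{proposition4}. The out-degree bound $D_v\leq 2$ is also correct. But the decisive step --- showing that no 2-AND-OR-BN on the surviving nodes can actually meet the injectivity/image-size requirement --- is exactly what you defer (``I would rule out each of these by a short counting argument,'' ``handled analogously,'' ``where the bulk of the work lies''). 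That step is the entire content of the paper's Claim~2, and it is not routine: the paper proves it by counting on \emph{two} output coordinates simultaneously. Since each $x_i(1)$ equals its minority value on only $2^{n-2}$ of the $2^{n}$ inputs, at least $2^{n-1}$ states satisfy $x_{n-1}(1)=x_{n}(1)=0$; if every fiber had size $\leq 2$ these would have to exhaust all $2^{n-2}$ prefixes exactly twice, consuming the entire ``ones budget'' of every $x_i(1)$ with $i<n-1$, which forces all $2^{n-2}$ states with $x_{n-1}(1)=1$, $x_n(1)=0$ onto the single vector $[0,\ldots,0,1,0]$ --- a fiber of size $2^{n-2}>2$ for $n>3$. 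A single-coordinate $3{:}1$-bias count of the kind you gesture at does not suffice: it only bounds the image of a 3-node sub-map by $7$, which does not contradict the threshold $6$ you need in your $D_v=0$, $n=4$ sub-case, so a finer (two-coordinate) argument is unavoidable there as well.

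There are also two smaller slips worth fixing. First, in the $D_v=0$ case the minority value of $f_v$ is attained on $2^{n-3}$ sub-initial states, not always two; so for $n=5$ the image-size threshold is $2^{4}-2^{2}=12$, not $14$. Second, ``within the remaining $2^{n-1}-2$ sub-initial states the map $F_{\mathrm{sub}}$ must itself be injective'' should be phrased as: every fiber of $F_{\mathrm{sub}}$ has size at most $2$, and each size-$2$ fiber must contain exactly one minority input of $f_v$, whence $|\mathrm{Image}(F_{\mathrm{sub}})|\geq 2^{n-1}-2^{n-3}$. If you carry out the two-coordinate counting argument (or simply cite the paper's Claim~2, which disposes of all $n>3$ at once without any reduction to $n\leq 5$ or case split on $D_v$), your proof closes.
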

\begin{proof}
First from Example \ref{example1}, we find that $\max_{\avec^{j}}COUNT(\avec^{j})=3$. Specifically, there are three different initial states $\xvec^{1}(0)=[0,0,0],\xvec^{2}(0)=[0,1,0],\xvec^{3}(0)=[1,0,0]$, where $\xvec(1;\xvec^{1}(0))=\xvec(1;\xvec^{2}(0))=\xvec(1;\xvec^{3}(0))=[0,0,0]$. Without loss of generality, we assume $y(t)=x_{1}(t)$, and then $y(0;\xvec^{1}(0))=0,y(0;\xvec^{2}(0))=0,y(0;\xvec^{3}(0))=1,~y(t;\xvec^{1}(0))=y(t;\xvec^{2}(0))=y(t;\xvec^{3}(0))=0,t\geq1$. Clearly, we cannot distinguish between $\xvec^{1}(0)=[0,0,0]$ and $\xvec^{2}(0)=[0,1,0]$ from $y(t)=x_{1}(t)$. Now we have the following fact.

\textbf{Claim 1}: If a 2-AND-OR-BN with $m=1$ observation node is observable, then the number of identical states $\xvec(1;\xvec(0))$ for all $2^{n}$ initial states $\xvec(0)$ does not exceed 2, i.e.,
\begin{eqnarray*}
\max_{\avec^{j}}COUNT(\avec^{j})\leq2.
\end{eqnarray*}
\textbf{Proof of Claim 1}: Suppose there are three different initial states $\xvec^{1}(0),\xvec^{2}(0),\xvec^{3}(0)$ such that
\begin{eqnarray*}
\xvec(1;\xvec^{1}(0))=\xvec(1;\xvec^{2}(0))=\xvec(1;\xvec^{3}(0)).
\end{eqnarray*}
We can only discriminate these three initial states from $\yvec(0)$ because $\yvec(t;\xvec^{1}(0))=\yvec(t;\xvec^{2}(0))=\yvec(t;\xvec^{3}(0)),~t\geq1$ (Here, $\yvec(t)=[y_{1}(t),\ldots,y_{m}(t)]$, where $y_{j}(t)= x_{\phi(j)}(t),~\forall j\in[1,m]$ and $\phi(j)$ is a function from $[1,m]$ to $[1,n]$). Then if $\yvec(0)=y_{1}(0)$, i.e., $m=1$, it is clear that $\xvec^{1}(0),\xvec^{2}(0),\xvec^{3}(0)$ cannot be determined from $y_{1}(0)$, because the value of $y_{1}(0)$ can only be 0 or 1.

Then we find that $\max_{\avec^{j}}COUNT(\avec^{j})>2$ for the 2-AND-OR-BN with $n=4$ in Example \ref{example2} and we have the following fact.

\textbf{Claim 2}: For any 2-AND-OR-BN with $n>3$, the number of identical states $\xvec(1;\xvec(0))$ for all $2^{n}$ initial states $\xvec(0)$ must exceed 2, i.e.,
\begin{eqnarray*}
\min\max_{\avec^{j}}COUNT(\avec^{j})>2.
\end{eqnarray*}
\textbf{Proof of Claim 2}: We only consider the case where $f_{i}$ is AND of two literals, where the number of ones (resp. zeros) corresponding to $x_{i}(1)$ for all $2^{n}$ initial states $\xvec(0)$ is $\frac{1}{4}\times2^{n}$ (resp. $\frac{3}{4}\times2^{n}$). Notably, the same result holds for any AND or OR functions of two literals.

First, we still consider the case of $n=4$, where it is clear that the number of ones (resp. zeros) corresponding to $x_{i}(1),~i\in[1,4]$ for all $16$ initial states $\xvec(0)$ is $4$ (resp. $12$) and we assume that there exists a 2-AND-OR-BN with $n=4$ that $\max_{\avec^{j}}COUNT(\avec^{j})=2$. If the number of states $\xvec(1;\xvec(0),x_{3}(1)=0,x_{4}(1)=0)$ exceeds 8, then $\max_{\avec^{j}}COUNT(\avec^{j})>2$. As shown in the table below (Table \ref{table7}), assuming that the number of states $\xvec(1;\xvec(0),x_{3}(1)=0,x_{4}(1)=0)$ is 9, then in this case, $[x_{1}(1),x_{2}(1)]$ has only $4$ different values, i.e., $[0,0],[0,1],[1,0],[1,1]$, so $\max_{\avec^{j}}COUNT(\avec^{j})>2$.
\begin{table}[ht!]
  \centering
  \caption{When the number of states $\xvec(1;\xvec(0),x_{3}(1)=0,x_{4}(1)=0)$ is 9, all corresponding states $\xvec(1;\xvec(0),x_{3}(1)=0,x_{4}(1)=0)$.}
  \label{table7}
\begin{tabular}{|cccc|}
\hline
$x_{1}(1)$&$x_{2}(1)$&$x_{3}(1)$&$x_{4}(1)$\\
\hline
$0$&$0$&$0$&$0$\\
$0$&$0$&$0$&$0$\\
$0$&$1$&$0$&$0$\\
$0$&$1$&$0$&$0$\\
$1$&$0$&$0$&$0$\\
$1$&$0$&$0$&$0$\\
$1$&$1$&$0$&$0$\\
$1$&$1$&$0$&$0$\\
$0$&$0$&$0$&$0$\\
\hline
\end{tabular}
\end{table}

Suppose that $\xvec(1;\xvec(0),x_{3}(1)=0,x_{4}(1)=0)$ is $8$.
In order to ensure that $\max_{\avec^{j}}COUNT(\avec^{j})=2$ in these $8$ states, we have the following table (Table \ref{table8}).
\begin{table}[ht!]
  \centering
  \caption{When the number of states $\xvec(1;\xvec(0),x_{3}(1)=0,x_{4}(1)=0)$ is 8, all corresponding states $\xvec(1;\xvec(0),x_{3}(1)=0,x_{4}(1)=0)$.}
  \label{table8}
\begin{tabular}{|cccc|}
\hline
$x_{1}(1)$&$x_{2}(1)$&$x_{3}(1)$&$x_{4}(1)$\\
\hline
$0$&$0$&$0$&$0$\\
$0$&$0$&$0$&$0$\\
$0$&$1$&$0$&$0$\\
$0$&$1$&$0$&$0$\\
$1$&$0$&$0$&$0$\\
$1$&$0$&$0$&$0$\\
$1$&$1$&$0$&$0$\\
$1$&$1$&$0$&$0$\\
\hline
\end{tabular}
\end{table}
However, for 16 $\xvec(1;\xvec(0))$,
since the number of ones of $x_{i}(1),~i\in[1,4]$ is 4,
as shown in the table below (Table \ref{table9}),
$COUNT(\avec^{j})=4$ must hold for
both $\avec^{j}=[0,0,0,1]$ and
$\avec^{j}=[0,0,1,0]$.

\begin{table}[ht!]
  \centering
  \caption{All 16 states $\xvec(1;\xvec(0))$.}
  \label{table9}
\begin{tabular}{|cccc|}
\hline
$x_{1}(1)$&$x_{2}(1)$&$x_{3}(1)$&$x_{4}(1)$\\
\hline
$0$&$0$&$0$&$1$\\
$0$&$0$&$0$&$1$\\
$0$&$0$&$0$&$1$\\
$0$&$0$&$0$&$1$\\
$0$&$0$&$1$&$0$\\
$0$&$0$&$1$&$0$\\
$0$&$0$&$1$&$0$\\
$0$&$0$&$1$&$0$\\
$0$&$0$&$0$&$0$\\
$0$&$0$&$0$&$0$\\
$0$&$1$&$0$&$0$\\
$0$&$1$&$0$&$0$\\
$1$&$0$&$0$&$0$\\
$1$&$0$&$0$&$0$\\
$1$&$1$&$0$&$0$\\
$1$&$1$&$0$&$0$\\
\hline
\end{tabular}
\end{table}
Furthermore, we can find that the number of states $\xvec(1;\xvec(0),x_{3}(1)=0,x_{4}(1)=0)$ is at least $8$ because, for 16 $\xvec(1;\xvec(0))$,
the number of ones of $x_{3}(1)$ (resp. $x_{4}(1)$) is 4.
Hence, we find that $\max_{\avec^{j}}COUNT(\avec^{j})>2$ for any 2-AND-OR-BN with $n=4$.

We generalize the above proof for any $n > 3$.
Suppose there exists a 2-AND-OR-BN with $n>3$ that $\max_{\avec^{j}}COUNT(\avec^{j})\leq2$.

Among $2^{n}$ states $\xvec(1;\xvec(0))$, there must be $\frac{1}{4}\times2^{n}$ states $\xvec(1;\xvec(0),x_{n}(1)=1)$, $\frac{1}{4}\times2^{n}$ states $\xvec(1;\xvec(0),x_{n-1}(1)=1,x_{n}(1)=0)$ and $\frac{2}{4}\times2^{n}$ states $\xvec(1;\xvec(0),x_{n-1}(1)=0,x_{n}(1)=0)$ by the following reason.
\begin{itemize}
\item For $[x_{1}(1),\ldots,x_{n-2}(1)]$, there are at most $2^{n-2}$ different vectors.
Then, in order to ensure that $\max_{\avec^{j}}COUNT(\avec^{j})\leq2$,
the number of states $\xvec(1;\xvec(0),x_{n-1}(1)=0,x_{n}(1)=0)$ does not exceed $2^{n-1}$ (In the case of $n=4$, it means that the number of $\xvec(1;\xvec(0),x_{3}(1)=0,x_{4}(1)=0)$ does not exceed $8$).
\item For $2^n$ vectors $\xvec(0)$, the number of ones of $x_{n-1}(1)$
(resp. $x_{n}(1)$) in $\xvec(1;\xvec(0))$ is $2^{n-2}$.
Thus,
the number of states $\xvec(1;\xvec(0),x_{n-1}(1)=0,x_{n}(1)=0)$ is at least
$2^{n-1}$.
\end{itemize}
Then, since
the number of states $\xvec(1;\xvec(0),x_{n-1}(1)=0,x_{n}(1)=0)$ is $2^{n-1}$
and $\max_{\avec^{j}}COUNT(\avec^{j})\leq2$,
the number of states $\xvec(1;\xvec(0),\xvec(1)=\avec)$ must be two
for each of $2^{n-2}$ $\avec$'s with $a_{n-1}=0$ and $a_n=0$.
This means that we cannot use 1 for any $x_{i}$ with $i < n-1$
in any $\xvec(1)$ with $x_{n-1}(1)=1$ and $x_{n}(1)=0$.
Therefore, we have
$COUNT(\avec)=2^{n-2}$ for $\avec=[0,0,\ldots,0,1,0]$,
which contradicts the assumption.

Finally, based on Claim 1 and Claim 2, there does not exist any 2-AND-OR-BN with $n>3$ that is observable with $m=1$.
\end{proof}

\begin{remark}
Compared to Theorem \ref{theorem1}, it seems that Proposition \ref{proposition3} is useful only when $0.188n<1$ (i.e., $n<6$). However, the technique mentioned in Proposition \ref{proposition3} for counting identical states $\xvec(1;\xvec(0))$ (i.e., finding $\max_{\avec^{i}}COUNT(\avec^{i})$) might be useful to show other lower bounds of the number of observation nodes $m$.
\end{remark}

Now we generalize Claim 1 and present a new technique to study the lower bounds of the number of observation nodes $m$, which is applicable to any BNs, as shown below.

\begin{proposition}\label{proposition4}
For any BN,
\begin{eqnarray*}
2^{m}\geq\max_{\avec^{i}}COUNT(\avec^{i})
\end{eqnarray*}
must hold so that the BN is observable.
\end{proposition}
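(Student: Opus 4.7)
The plan is to generalize the argument already spelled out as Claim 1 in the proof of Proposition \ref{proposition3}. The key observation is that once two different initial states collapse to the same state at time 1, the BN is deterministic and they share the same trajectory forever after, so they can only be distinguished by the output at time 0, which carries exactly $m$ bits of information.

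Concretely, I would first let $\avec^{i^\ast}$ achieve $\max_{\avec^{i}} COUNT(\avec^{i})$ and write $k := COUNT(\avec^{i^\ast})$. By definition, there exist $k$ distinct initial states $\xvec^{1}(0),\ldots,\xvec^{k}(0) \in \{0,1\}^{n}$ such that
\begin{eqnarray*}
\xvec(1;\xvec^{1}(0)) = \xvec(1;\xvec^{2}(0)) = \cdots = \xvec(1;\xvec^{k}(0)) = \avec^{i^\ast}.
\end{eqnarray*}
Then I would invoke the determinism of the BN update rule: if two initial states agree at time 1, induction on $t$ gives $\xvec(t;\xvec^{s}(0)) = \xvec(t;\xvec^{s'}(0))$ for all $t \geq 1$ and all $s,s' \in [1,k]$. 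Since $y_{j}(t) = x_{\phi(j)}(t)$, this forces $\yvec(t;\xvec^{s}(0)) = \yvec(t;\xvec^{s'}(0))$ for every $t \geq 1$.

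Next I would use observability. Because the BN is observable on some $[0,N]$, the $k$ distinct initial states must produce $k$ distinct output sequences $\yvec(0;\xvec^{s}(0)),\ldots,\yvec(N;\xvec^{s}(0))$. By the previous step, only the $t=0$ component can vary across $s$, so the $k$ vectors $\yvec(0;\xvec^{1}(0)),\ldots,\yvec(0;\xvec^{k}(0))$ must all be distinct elements of $\{0,1\}^{m}$. Therefore $k \leq |\{0,1\}^{m}| = 2^{m}$, which is exactly $2^{m} \geq \max_{\avec^{i}} COUNT(\avec^{i})$.

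There is essentially no technical obstacle here; the proof is a short pigeonhole argument on the output at time 0. The only thing one must be careful about is the trivial induction step establishing that identical states at time 1 yield identical outputs at all subsequent times, and noting that observability on $[0,N]$ for \emph{any} finite $N$ (not just a specific one) still forces pairwise distinct output sequences among the $k$ colliding initial states, which reduces all distinguishing power to $\yvec(0)$.
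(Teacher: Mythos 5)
Your proposal is correct and follows essentially the same route as the paper's own proof: both arguments note that the colliding initial states share identical trajectories (hence identical outputs) for all $t\geq 1$ by determinism, so observability forces their $\yvec(0)$ values to be pairwise distinct, and the bound follows by pigeonhole on the at most $2^{m}$ possible values of $\yvec(0)$. Your version merely makes the induction on determinism and the role of observability on $[0,N]$ slightly more explicit.
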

\begin{proof}
Suppose $\max_{\avec^{j}}COUNT(\avec^{j})=l$, i.e., there are $l$ different initial states $\xvec^{1}(0),\ldots,\xvec^{l}(0)$ such that $\xvec(1;\xvec^{1}(0))=\cdots=\xvec(1;\xvec^{l}(0))$. We can only discriminate these $l$ different initial states $\xvec^{1}(0),\ldots,\xvec^{l}(0)$ from $\yvec(0)$ because $\yvec(t;\xvec^{1}(0))=\cdots=\yvec(t;\xvec^{l}(0)),~t\geq1$. Then if $\xvec^{1}(0),\ldots,\xvec^{l}(0)$ can be distinguished from $\yvec(0)=[y_{1}(0),\ldots,y_{m}(0)]$, there are $l$ different values $\yvec(0;\xvec^{1}(0)),\ldots,\yvec(0;\xvec^{l}(0))$. However, the number of different values of $\yvec(0)$ is at most $2^{m}$ and $2^{m}<l$, which implies that $\xvec^{1}(0),\ldots,\xvec^{l}(0)$ cannot be determined from $\yvec(0)=[y_{1}(0),\ldots,y_{m}(0)]$.
\end{proof}

Then we focus on the case where each $f_{i},~i\in[1,n]$ is AND or OR of $K>2$ literals. In other words, for each $f_{i},~i\in[1,n]$, we only consider the following two forms, $\ell_{i_{1}}(t)\wedge \ell_{i_{2}}(t)\wedge\cdots\wedge \ell_{i_{K}}(t)$ or $\ell_{i_{1}}(t)\vee \ell_{i_{2}}(t)\vee\cdots\vee \ell_{i_{K}}(t)$, where $\ell_{i}\in\{x_{1},\overline{x_{1}},x_{2},\overline{x_{2}},\ldots,x_{n},\overline{x_{n}}\}$.
Similarly, we call these BNs $K$-AND-OR-BN.

We extend the above results to $K$-AND-OR-BN with arbitrary fixed $K>2$ as follows.

\begin{theorem}\label{theorem2}
For any K-AND-OR-BN, $m\geq \left[(1-K)+\frac{2^{K}-1}{2^{K}}\log_{2}(2^{K}-1)\right]n$ must hold so that the BN is observable.
\end{theorem}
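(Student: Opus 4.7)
The plan is to mimic the information theoretic argument in the proof of Theorem \ref{theorem1}, with the single node entropy computation extended to $K$ literals. The key observation is that any $K$-ary AND or OR function is extremely biased: an AND of $K$ literals evaluates to $1$ on exactly one of the $2^K$ input patterns, and an OR of $K$ literals evaluates to $0$ on exactly one pattern, so in either case the output takes one value with probability $1/2^K$ and the other with probability $(2^K-1)/2^K$.

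First, I would fix an arbitrary $i\in[1,n]$ and assume that $\xvec(0)$ is uniform on $\{0,1\}^n$, so that the $K$ relevant inputs of $f_i$ are also uniform on $\{0,1\}^K$. By the observation above, the entropy of $x_i(1)$ equals
\begin{eqnarray*}
H(x_i(1)) = -\frac{2^K-1}{2^K}\log_{2}\frac{2^K-1}{2^K} - \frac{1}{2^K}\log_{2}\frac{1}{2^K}.
\end{eqnarray*}
Using $\log_{2}(a/2^K)=\log_{2}a - K$ and collecting the $K$ terms, this simplifies to $H(x_i(1)) = K - \frac{2^K-1}{2^K}\log_{2}(2^K-1)$, which recovers the value $\approx 0.812$ when $K=2$.

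Second, I would invoke Theorem 12 of \cite{guo22} exactly as in the proof of Theorem \ref{theorem1}: the total information content of $\xvec(1)$ is at most the sum of the per-node entropies, i.e., at most $n\bigl[K - \frac{2^K-1}{2^K}\log_{2}(2^K-1)\bigr]$ bits. Since the time series $[\yvec(1),\ldots,\yvec(N)]$ is a deterministic function of $\xvec(1)$, it carries no more information than $\xvec(1)$ does. In order to distinguish all $2^n$ initial states we need $n$ bits, so the shortfall must be supplied by $\yvec(0)$, giving
\begin{eqnarray*}
m \;\geq\; n - n\left[K - \frac{2^K-1}{2^K}\log_{2}(2^K-1)\right] \;=\; \left[(1-K) + \frac{2^K-1}{2^K}\log_{2}(2^K-1)\right]n,
\end{eqnarray*}
which is the claimed bound.

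The argument is essentially routine once the single node entropy has been computed; the main thing to check carefully is that the symmetry between AND and OR ensures the same entropy regardless of which of the two forms each $f_i$ takes, so that the node wise bound applies uniformly across the whole network. A minor subtlety is that the identity $H(\xvec(1))\le \sum_i H(x_i(1))$ used via Theorem 12 of \cite{guo22} does not require the coordinates of $\xvec(1)$ to be independent, which is crucial since in general they are not.
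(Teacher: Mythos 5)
Your proposal is correct and follows essentially the same route as the paper's own proof: the same computation of the per-node entropy $K-\frac{2^{K}-1}{2^{K}}\log_{2}(2^{K}-1)$ for a $K$-ary AND/OR node under a uniform initial state, the same appeal to Theorem 12 of \cite{guo22} to bound the information content of $\xvec(1)$ by the sum of per-node entropies, and the same accounting that the remaining $n-H(\xvec(1))$ bits must be supplied by $\yvec(0)$. Your closing remarks on the AND/OR symmetry and on subadditivity of entropy without independence are accurate and, if anything, make the argument slightly more explicit than the paper's.
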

\begin{proof}
Similarly, consider the case of $x_{1}(t+1)=x_{1}(t)\wedge x_{2}(t)\wedge\cdots\wedge x_{K}(t)$, where the same result holds for any AND or OR functions of $K$ literals. Then, the state transitions from $\xvec(0)$ to $x_{1}(1)$ are given as below (Table \ref{table10}).
\begin{table}[ht!]
  \centering
  \caption{State transition table under the case of $x_{1}(t+1)=x_{1}(t)\wedge x_{2}(t)\wedge\cdots\wedge x_{K}(t)$.}
  \label{table10}
\begin{tabular}{|ccccc|c|}
\hline
$x_{1}(0)$&$x_{2}(0)$&$\cdots$&$x_{K-1}(0)$&$x_{K}(0)$&$x_{1}(1)$\\
\hline
$0$&$0$&$\cdots$&$0$&$0$&$0$\\
$0$&$0$&$\cdots$&$0$&$1$&$0$\\
$\vdots$&$\vdots$&$\ddots$&$\vdots$&$\vdots$&$\vdots$\\
$1$&$1$&$\cdots$&$1$&$0$&$0$\\
$1$&$1$&$\cdots$&$1$&$1$&$1$\\
\hline
\end{tabular}
\end{table}
Assuming that all $2^{K}$ inputs are given with probability $\frac{1}{2^{K}}$, the entropy of $x_{1}(1)$ is
\begin{eqnarray*}
-\left[\frac{1}{2^{K}}\log_{2}\left(\frac{1}{2^{K}}\right)+\frac{2^{K}-1}{2^{K}}\log_{2}\left(\frac{2^{K}-1}{2^{K}}\right)\right]=K-\frac{2^{K}-1}{2^{K}}\log_{2}(2^{K}-1).
\end{eqnarray*}
In total, the information quantity of $\xvec(1)$ is at most $\left[K-\frac{2^{K}-1}{2^{K}}\log_{2}(2^{K}-1)\right]n$ bits, which might be obtained by $[\yvec(1),\ldots,\yvec(N)]$. However, in order to observe any $\xvec(0)$, we should have the information quantity of $n$ bits. Therefore, at least $n-\left[K-\frac{2^{K}-1}{2^{K}}\log_{2}(2^{K}-1)\right]n$ bits must be provided from $\yvec(0)$, which implies $m\geq \left[(1-K)+\frac{2^{K}-1}{2^{K}}\log_{2}(2^{K}-1)\right]n$.
\end{proof}

In the worst case, we need $m=n$ observation nodes.

\begin{proposition}\label{proposition5}
In the worst case, we need $m=n$ observation nodes for $K$-AND-OR-BN with arbitrary fixed $K>2$ so that this BN is observable.
\end{proposition}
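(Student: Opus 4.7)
The plan is to generalize the cyclic AND-chain construction used in Proposition \ref{proposition1} from two literals to $K$ literals. Assume $n\geq K+1$ (smaller cases either reduce to Proposition \ref{proposition1}-style arguments or are vacuous). I would exhibit an explicit $K$-AND-OR-BN for which every coordinate must be observed, by engineering the update rules so that (i) the literal $x_{n}$ appears in every equation, and (ii) the literal $x_{1}$ appears in the equation for $x_{n}$. Concretely, I would take
\begin{eqnarray*}
x_{i}(t+1) & = & x_{i}(t)\wedge x_{n}(t)\wedge x_{j_{i,1}}(t)\wedge\cdots\wedge x_{j_{i,K-2}}(t),\quad i\in[1,n-1],\\
x_{n}(t+1) & = & x_{n}(t)\wedge x_{1}(t)\wedge x_{j_{n,1}}(t)\wedge\cdots\wedge x_{j_{n,K-2}}(t),
\end{eqnarray*}
where the padding indices $j_{i,\cdot}$ are chosen from $[1,n]$ so that in every equation all $K$ literals are pairwise distinct; this is possible because $n\geq K+1$. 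Each listed literal is relevant, since setting the remaining $K-1$ literals to $1$ reduces the right-hand side to the single remaining literal, so flipping it changes the output.

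\textbf{Step 1 (forcing $x_{1},\ldots,x_{n-1}$ into the observation set).} Whenever $x_{n}(0)=0$, the AND on the right-hand side of every equation contains a zero factor, so $x_{i}(1)=0$ for every $i\in[1,n-1]$ and $x_{n}(1)=0$; hence $\xvec(1;\xvec(0))=[0,\ldots,0]$ for \emph{every} initial state with $x_{n}(0)=0$. Fix any $i\in[1,n-1]$ and compare the two initial states that are all zero except possibly a $1$ in position $i$; they share the same trajectory from time $1$ onwards, so they can only be distinguished via $\yvec(0)$, which forces $x_{i}$ to be among the observation nodes.

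\textbf{Step 2 (forcing $x_{n}$ into the observation set).} Compare $\xvec^{1}(0)=[0,\ldots,0,0]$ and $\xvec^{2}(0)=[0,\ldots,0,1]$. For $\xvec^{2}(0)$, the equation for $x_{n}$ contains the factor $x_{1}(0)=0$, so $x_{n}(1)=0$; every other equation already contains the factor $x_{i}(0)=0$ for some $i<n$, so $x_{i}(1)=0$. Thus $\xvec^{1}(1)=\xvec^{2}(1)=[0,\ldots,0]$ and the two initial states can only be separated at time $0$, which forces $x_{n}$ to be observed as well.

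Combining the two steps gives $m=n$. The only delicate point in the plan is the combinatorial bookkeeping in choosing the padding indices $j_{i,\cdot}$ so that the constructed BN is genuinely a $K$-AND-OR-BN in the sense of the preliminaries (distinct, relevant literals, exactly $K$ per equation); this is a routine check once $n\geq K+1$, and the essential content of the proof is the two ``collapse to zero'' observations above, which are the direct analogues of the two cases treated in Proposition \ref{proposition1}.
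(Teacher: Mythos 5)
Your proposal is correct, but it takes a genuinely different route from the paper. The paper's proof exhibits a single BN on $n=K+1$ nodes (built with selectively negated literals) for which $2^{K+1}-2K$ initial states collide into $\xvec(1)=[0,\ldots,0]$, and then invokes the general counting bound of Proposition \ref{proposition4} ($2^{m}\geq\max_{\avec^{j}}COUNT(\avec^{j})$) together with $2^{K+1}-2K>2^{K}$ for $K>2$ to force $m\geq K+1=n$. You instead generalize the explicit cyclic construction and the direct argument of Proposition \ref{proposition1}: by placing $x_{n}$ as a factor in every equation and $x_{1}$ in the equation for $x_{n}$, every initial state in the relevant comparisons collapses to the all-zero state at time $1$, so each pair $\{\mathbf{0},e_{i}\}$ can only be separated at time $0$, pinning down every $x_{i}$ individually as a necessary observation node. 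Both arguments are sound (your two collapse observations check out, and the padding is indeed routine for $n\geq K+1$); the trade-off is that the paper's version doubles as a demonstration of its counting technique, whereas yours is more elementary, works for every $n\geq K+1$ rather than only $n=K+1$, and yields the stronger conclusion that the observation set must be exactly $\{x_{1},\ldots,x_{n}\}$ rather than merely that $m\geq n$.
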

\begin{proof}
Consider the following BN:
\begin{eqnarray*}
x_{1}(t+1) & = & x_{1}(t)\wedge x_{2}(t)\wedge\cdots\wedge x_{K-1}(t)\wedge x_{K}(t),\\
x_{2}(t+1) & = & \overline{x_{1}(t)}\wedge x_{2}(t)\wedge\cdots\wedge x_{K-1}(t)\wedge x_{K}(t),\\
& \cdots &\\
x_{K}(t+1) & = & x_{1}(t)\wedge x_{2}(t)\wedge\cdots\wedge \overline{x_{K-1}(t)}\wedge x_{K}(t),\\
x_{K+1}(t+1) & = & x_{2}(t)\wedge x_{3}(t)\wedge\cdots\wedge x_{K}(t)\wedge x_{K+1}(t),
\end{eqnarray*}
where $K>2$. Here, the number of $\xvec(1;\xvec(0))=[0,0,\ldots,0]$ for all $2^{n}$ initial states $\xvec(0)$ is $2^{K+1}-2K$. Then the number of observation nodes is $K+1$, because if this $K$-AND-OR-BN with $m$ observation nodes is observable, then $2^{m}\geq2^{K+1}-2K,~K>2$, i.e., $m=K+1$.
\end{proof}

On the other hand, we show that there exists a $K$-AND-OR-BN with $m=\left(\frac{2^{K}-K-1}{2^{K}-1}\right)n$.

\begin{theorem}\label{theorem}
For any $n$ and $K>2$, there exists a $K$-AND-OR-BN that is observable with $m=\left(\frac{2^{K}-K-1}{2^{K}-1}\right)n$, where $(n \mod 2^{K}-1)=0$.
\end{theorem}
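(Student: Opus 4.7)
The plan is to reduce the theorem to constructing a single building block: a $K$-AND-OR-BN on $n'=2^{K}-1$ nodes that is observable using $m'=2^{K}-K-1$ observation nodes. Granted such a block, take $n/(2^{K}-1)$ disjoint (non-interacting) copies, which is an integer by the hypothesis $(n\bmod 2^{K}-1)=0$. Since distinct copies cannot influence each other's dynamics or outputs, observability of each copy implies observability of the union, and the total number of observation nodes is $\frac{n}{2^{K}-1}(2^{K}-K-1)=\left(\frac{2^{K}-K-1}{2^{K}-1}\right)n$, matching the claim. This concatenation step exactly parallels the $K=2$ argument in Proposition \ref{proposition2}, so the theorem reduces to designing the block.

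For the block itself, I would generalize the $K=2$ construction. Designate $K$ of the $2^{K}-1$ nodes as hidden variables $h_{1},\ldots,h_{K}$, and define every node's update to be a $K$-wise conjunction of literals drawn from $\{h_{1},\overline{h_{1}},\ldots,h_{K},\overline{h_{K}}\}$, so that each node becomes the indicator $v_{\sigma}$ of a specific assignment $\sigma\in\{0,1\}^{K}$; since there are only $2^{K}-1$ nodes, exactly one assignment $\sigma^{*}$ must be omitted. Then choose $\sigma^{*}$ together with the set of hidden indicators (for instance, in the $K=2$ case of Proposition \ref{proposition2}, $\sigma^{*}=(1,0)$ and hidden $=\{v_{(0,0)},v_{(1,1)}\}$) so that the induced hidden-state functional graph on $\{0,1\}^{K}$ is a tail-plus-cycle whose tail threads through all $2^{K}$ hidden states at distinct distances from some designated observed indicator. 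Observability is then verified by tracing: $\yvec(0)$ reveals the initial values of the observation nodes directly, and the first time (and location) an observation node fires in $\yvec(1),\ldots,\yvec(N)$ uniquely pins down the $K$ initial hidden values, provided $N$ exceeds the longest tail length.

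The main obstacle is the combinatorial design of the block, i.e., selecting $\sigma^{*}$ and the hidden/observation split so that the hidden-state dynamics really do produce $2^{K}$ distinct observation trajectories. A naive choice such as hidden $=\{v_{e_{1}},\ldots,v_{e_{K}}\}$ collapses the all-zero hidden state and every $e_{k}$-state into identical trivial trajectories (every observation remains zero for all $t\geq 1$), so the hidden positions must instead be placed on multi-weight or non-unit-vector indicators in order to lengthen the functional-graph tails. Once a valid block is exhibited and its dynamics are traced, the observability verification becomes a routine state-by-state argument, mirroring the one used in Proposition \ref{proposition2}.
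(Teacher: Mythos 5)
Your reduction to a single block on $2^{K}-1$ nodes via disjoint copies is exactly the paper's first step, and your choice of construction family --- make every node the indicator $v_{\sigma}$ of a distinct assignment $\sigma\in\{0,1\}^{K}$ of $K$ designated hidden variables, omitting one assignment $\sigma^{*}$ --- is also the paper's. You even correctly diagnose that the hidden nodes must not sit on the unit-weight indicators. But the proposal stops at the point where the actual work lies: no concrete block is exhibited, and the completion criterion you state is unattainable. You ask for a hidden-state functional graph on $\{0,1\}^{K}$ whose tail ``threads through all $2^{K}$ hidden states at distinct distances.'' In this construction family that cannot happen: after one update the full state is either a unit vector (the unique indicator matching the initial hidden assignment) or the all-zero vector, so from time $1$ onward the hidden block can take only the $K+1$ values $\mathbf{0},e_{1},\ldots,e_{K}$. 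Long tails are impossible, and no choice of $\sigma^{*}$ and hidden positions realizes the graph you describe.

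The correct (and much weaker) requirement, which the paper's block satisfies, is two-step distinguishability. Exactly $K+1$ initial hidden assignments are invisible at time $1$, namely the omitted $\sigma^{*}$ and the $K$ assignments whose indicators are hidden nodes; one needs these to map to pairwise distinct time-$1$ states whose own indicators are observed. The paper arranges this by taking the hidden nodes to be the indicators of the all-ones assignment and of $K-1$ of the $K$ weight-$(K-1)$ assignments, with $\sigma^{*}$ the remaining weight-$(K-1)$ assignment: then at time $1$ the hidden block is $\mathbf{0}$ or a unit vector, and the indicators of the weight-$0$ and weight-$1$ assignments are among the $2^{K}-K-1$ observed nodes because $K+1\leq 2^{K}-K-1$ for $K>2$. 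Hence $\yvec(0),\yvec(1),\yvec(2)$ determine $\xvec(0)$. Your proposal would become a proof once you replace the tail criterion by this two-step condition and write down such a placement explicitly.
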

\begin{proof}
First, we consider the case of $n=2^{K}-1$ and construct the following BN:
\begin{eqnarray*}
x_{1}(t+1) & = & \overline{x_{2^{K}-K}(t)}\wedge\overline{x_{2^{K}-K+1}(t)}\wedge\cdots\wedge\overline{x_{2^{K}-3}(t)}\wedge\overline{x_{2^{K}-2}(t)}\wedge\overline{x_{2^{K}-1}(t)},\\
x_{2}(t+1) & = & \overline{x_{2^{K}-K}(t)}\wedge\overline{x_{2^{K}-K+1}(t)}\wedge\cdots\wedge\overline{x_{2^{K}-3}(t)}\wedge\overline{x_{2^{K}-2}(t)}\wedge x_{2^{K}-1}(t),\\
& \cdots &\\
x_{K+1}(t+1) & = & x_{2^{K}-K}(t)\wedge\overline{x_{2^{K}-K+1}(t)}\wedge\cdots\wedge\overline{x_{2^{K}-3}(t)}\wedge\overline{x_{2^{K}-2}(t)}\wedge\overline{x_{2^{K}-1}(t)},\\
x_{K+2}(t+1) & = & \overline{x_{2^{K}-K}(t)}\wedge \overline{x_{2^{K}-K+1}(t)}\wedge\cdots\wedge\overline{x_{2^{K}-3}(t)}\wedge x_{2^{K}-2}(t)\wedge x_{2^{K}-1}(t),\\
& \cdots &\\
x_{2^{K}-K-1}(t+1) & = & x_{2^{K}-K}(t)\wedge x_{2^{K}-K+1}(t)\wedge\cdots\wedge x_{2^{K}-3}(t)\wedge\overline{x_{2^{K}-2}(t)}\wedge\overline{x_{2^{K}-1}(t)},\\
x_{2^{K}-K}(t+1) & = & x_{2^{K}-K}(t)\wedge x_{2^{K}-K+1}(t)\wedge\cdots\wedge x_{2^{K}-3}(t)\wedge x_{2^{K}-2}(t)\wedge x_{2^{K}-1}(t),\\
x_{2^{K}-K+1}(t+1) & = & \overline{x_{2^{K}-K}(t)}\wedge x_{2^{K}-K+1}(t)\wedge\cdots\wedge x_{2^{K}-3}(t)\wedge x_{2^{K}-2}(t)\wedge x_{2^{K}-1}(t),\\
& \cdots &\\
x_{2^{K}-1}(t+1) & = & x_{2^{K}-K}(t)\wedge x_{2^{K}-K+1}(t)\wedge\cdots\wedge x_{2^{K}-3}(t)\wedge\overline{x_{2^{K}-2}(t)}\wedge x_{2^{K}-1}(t),\\
y_{1}(t) & = & x_{1}(t),\\
y_{2}(t) & = & x_{2}(t),\\
& \cdots &\\
y_{2^{K}-K-1}(t) & = & x_{2^{K}-K-1}(t).
\end{eqnarray*}
Then, we have the following state transition table (Table \ref{table11}).
\begin{table}[ht!]
  \centering
  \caption{State transition table in Theorem \ref{theorem}.}
  \label{table11}
\resizebox{\linewidth}{!}{
\begin{tabular}{|cccccccc|cccccccccccc|}
\hline
$\cdots$&$x_{2^{K}-K}(0)$&$x_{2^{K}-K+1}(0)$&$x_{2^{K}-K+2}(0)$&$\cdots$&$x_{2^{K}-3}(0)$&$x_{2^{K}-2}(0)$&$x_{2^{K}-1}(0)$&$x_{1}(1)$&$x_{2}(1)$&$\cdots$&$x_{2^{K}-K-2}(1)$&$x_{2^{K}-K-1}(1)$&$x_{2^{K}-K}(1)$&$x_{2^{K}-K+1}(1)$&$x_{2^{K}-K+2}(1)$&$\cdots$&$x_{2^{K}-3}(1)$&$x_{2^{K}-2}(1)$&$x_{2^{K}-1}(1)$\\           \hline
$\cdots$&$0$&$0$&$0$&$\cdots$&$0$&$0$&$0$&$1$&$0$&$\cdots$&$0$&$0$&$0$&$0$&$0$&$\cdots$&$0$&$0$&$0$\\
$\cdots$&$0$&$0$&$0$&$\cdots$&$0$&$0$&$1$&$0$&$1$&$\cdots$&$0$&$0$&$0$&$0$&$0$&$\cdots$&$0$&$0$&$0$\\
$\cdots$&$\vdots$&$\vdots$&$\vdots$&$\ddots$&$\vdots$&$\vdots$&$\vdots$&$\vdots$&$\vdots$&$\ddots$&$\vdots$&$\vdots$&$\vdots$&$\vdots$&$\vdots$&$\ddots$&$\vdots$&$\vdots$&$\vdots$\\
$\cdots$&$1$&$1$&$1$&$\cdots$&$1$&$0$&$0$&$0$&$0$&$\cdots$&$0$&$1$&$0$&$0$&$0$&$\cdots$&$0$&$0$&$0$\\
$\cdots$&$1$&$1$&$1$&$\cdots$&$1$&$1$&$1$&$0$&$0$&$\cdots$&$0$&$0$&$1$&$0$&$0$&$\cdots$&$0$&$0$&$0$\\
$\cdots$&$0$&$1$&$1$&$\cdots$&$1$&$1$&$1$&$0$&$0$&$\cdots$&$0$&$0$&$0$&$1$&$0$&$\cdots$&$0$&$0$&$0$\\
$\cdots$&$1$&$0$&$1$&$\cdots$&$1$&$1$&$1$&$0$&$0$&$\cdots$&$0$&$0$&$0$&$0$&$1$&$\cdots$&$0$&$0$&$0$\\
$\ddots$&$\vdots$&$\vdots$&$\vdots$&$\ddots$&$\vdots$&$\vdots$&$\vdots$&$\vdots$&$\vdots$&$\ddots$&$\vdots$&$\vdots$&$\vdots$&$\vdots$&$\vdots$&$\ddots$&$\vdots$&$\vdots$&$\vdots$\\
$\cdots$&$1$&$1$&$1$&$\cdots$&$0$&$1$&$1$&$0$&$0$&$\cdots$&$0$&$0$&$0$&$0$&$0$&$\cdots$&$0$&$1$&$0$\\
$\cdots$&$1$&$1$&$1$&$\cdots$&$1$&$0$&$1$&$0$&$0$&$\cdots$&$0$&$0$&$0$&$0$&$0$&$\cdots$&$0$&$0$&$1$\\
$\cdots$&$1$&$1$&$1$&$\cdots$&$1$&$1$&$0$&$0$&$0$&$\cdots$&$0$&$0$&$0$&$0$&$0$&$\cdots$&$0$&$0$&$0$\\
\hline
\end{tabular}}
\end{table}

Clearly, $\xvec(0)$ can be recovered from $\yvec(0),\yvec(1),\yvec(2)$.

Next, we consider the case of $n>2^{K}-1$. In this case, it is enough to make $\frac{n}{2^{K}-1}$ copies of the above BN. Then, the number of observation nodes is $\frac{n}{2^{K}-1}\times(2^{K}-K-1)=\left(\frac{2^{K}-K-1}{2^{K}-1}\right)n$.
\end{proof}

\begin{example}\label{exmp}
Consider the following BN:
\begin{eqnarray*}
x_{1}(t+1) & = & \overline{x_{12}(t)}\wedge \overline{x_{13}(t)}\wedge \overline{x_{14}(t)}\wedge \overline{x_{15}(t)},\\
x_{2}(t+1) & = & \overline{x_{12}(t)}\wedge \overline{x_{13}(t)}\wedge \overline{x_{14}(t)}\wedge x_{15}(t),\\
x_{3}(t+1) & = & \overline{x_{12}(t)}\wedge \overline{x_{13}(t)}\wedge x_{14}(t)\wedge \overline{x_{15}(t)},\\
x_{4}(t+1) & = & \overline{x_{12}(t)}\wedge x_{13}(t)\wedge \overline{x_{14}(t)}\wedge \overline{x_{15}(t)},\\
x_{5}(t+1) & = & x_{12}(t)\wedge \overline{x_{13}(t)}\wedge \overline{x_{14}(t)}\wedge \overline{x_{15}(t)},\\
x_{6}(t+1) & = & \overline{x_{12}(t)}\wedge \overline{x_{13}(t)}\wedge x_{14}(t)\wedge x_{15}(t),\\
x_{7}(t+1) & = & \overline{x_{12}(t)}\wedge x_{13}(t)\wedge \overline{x_{14}(t)}\wedge x_{15}(t),\\
x_{8}(t+1) & = & x_{12}(t)\wedge \overline{x_{13}(t)}\wedge \overline{x_{14}(t)}\wedge x_{15}(t),\\
x_{9}(t+1) & = & \overline{x_{12}(t)}\wedge x_{13}(t)\wedge x_{14}(t)\wedge \overline{x_{15}(t)},\\
x_{10}(t+1) & = & x_{12}(t)\wedge \overline{x_{13}(t)}\wedge x_{14}(t)\wedge \overline{x_{15}(t)},\\
x_{11}(t+1) & = & x_{12}(t)\wedge x_{13}(t)\wedge \overline{x_{14}(t)}\wedge \overline{x_{15}(t)},\\
x_{12}(t+1) & = & x_{12}(t)\wedge x_{13}(t)\wedge x_{14}(t)\wedge x_{15}(t),\\
x_{13}(t+1) & = & \overline{x_{12}(t)}\wedge x_{13}(t)\wedge x_{14}(t)\wedge x_{15}(t),\\
x_{14}(t+1) & = & x_{12}(t)\wedge \overline{x_{13}(t)}\wedge x_{14}(t)\wedge x_{15}(t),\\
x_{15}(t+1) & = & x_{12}(t)\wedge x_{13}(t)\wedge \overline{x_{14}(t)}\wedge x_{15}(t),\\
y_{1}(t) & = & x_{1}(t),\\
y_{2}(t) & = & x_{2}(t),\\
& \cdots &\\
y_{11}(t) & = & x_{11}(t).
\end{eqnarray*}
Then, we have the following state transition table (Table \ref{table12}).
\begin{table}[ht!]
  \centering
  \caption{State transition table in Example \ref{exmp}.}
  \label{table12}
\resizebox{\linewidth}{!}{
\begin{tabular}{|ccccc|ccccccccccccccc|}
\hline
$\cdots$&$x_{12}(0)$&$x_{13}(0)$&$x_{14}(0)$&$x_{15}(0)$&$x_{1}(1)$&$x_{2}(1)$&$x_{3}(1)$&$x_{4}(1)$&$x_{5}(1)$&$x_{6}(1)$&$x_{7}(1)$&$x_{8}(1)$&$x_{9}(1)$&$x_{10}(1)$&$x_{11}(1)$&$x_{12}(1)$&$x_{13}(1)$&$x_{14}(1)$&$x_{15}(1)$\\           \hline
$\cdots$&$0$&$0$&$0$&$0$&$1$&$0$&$0$&$0$&$0$&$0$&$0$&$0$&$0$&$0$&$0$&$0$&$0$&$0$&$0$\\
$\cdots$&$0$&$0$&$0$&$1$&$0$&$1$&$0$&$0$&$0$&$0$&$0$&$0$&$0$&$0$&$0$&$0$&$0$&$0$&$0$\\
$\cdots$&$0$&$0$&$1$&$0$&$0$&$0$&$1$&$0$&$0$&$0$&$0$&$0$&$0$&$0$&$0$&$0$&$0$&$0$&$0$\\
$\cdots$&$0$&$1$&$0$&$0$&$0$&$0$&$0$&$1$&$0$&$0$&$0$&$0$&$0$&$0$&$0$&$0$&$0$&$0$&$0$\\
$\cdots$&$1$&$0$&$0$&$0$&$0$&$0$&$0$&$0$&$1$&$0$&$0$&$0$&$0$&$0$&$0$&$0$&$0$&$0$&$0$\\
$\cdots$&$0$&$0$&$1$&$1$&$0$&$0$&$0$&$0$&$0$&$1$&$0$&$0$&$0$&$0$&$0$&$0$&$0$&$0$&$0$\\
$\cdots$&$0$&$1$&$0$&$1$&$0$&$0$&$0$&$0$&$0$&$0$&$1$&$0$&$0$&$0$&$0$&$0$&$0$&$0$&$0$\\
$\cdots$&$1$&$0$&$0$&$1$&$0$&$0$&$0$&$0$&$0$&$0$&$0$&$1$&$0$&$0$&$0$&$0$&$0$&$0$&$0$\\
$\cdots$&$0$&$1$&$1$&$0$&$0$&$0$&$0$&$0$&$0$&$0$&$0$&$0$&$1$&$0$&$0$&$0$&$0$&$0$&$0$\\
$\cdots$&$1$&$0$&$1$&$0$&$0$&$0$&$0$&$0$&$0$&$0$&$0$&$0$&$0$&$1$&$0$&$0$&$0$&$0$&$0$\\
$\cdots$&$1$&$1$&$0$&$0$&$0$&$0$&$0$&$0$&$0$&$0$&$0$&$0$&$0$&$0$&$1$&$0$&$0$&$0$&$0$\\
$\cdots$&$1$&$1$&$1$&$1$&$0$&$0$&$0$&$0$&$0$&$0$&$0$&$0$&$0$&$0$&$0$&$1$&$0$&$0$&$0$\\
$\cdots$&$0$&$1$&$1$&$1$&$0$&$0$&$0$&$0$&$0$&$0$&$0$&$0$&$0$&$0$&$0$&$0$&$1$&$0$&$0$\\
$\cdots$&$1$&$0$&$1$&$1$&$0$&$0$&$0$&$0$&$0$&$0$&$0$&$0$&$0$&$0$&$0$&$0$&$0$&$1$&$0$\\
$\cdots$&$1$&$1$&$0$&$1$&$0$&$0$&$0$&$0$&$0$&$0$&$0$&$0$&$0$&$0$&$0$&$0$&$0$&$0$&$1$\\
$\cdots$&$1$&$1$&$1$&$0$&$0$&$0$&$0$&$0$&$0$&$0$&$0$&$0$&$0$&$0$&$0$&$0$&$0$&$0$&$0$\\
\hline
\end{tabular}}
\end{table}

Clearly, $\xvec(0)$ can be recovered from $\yvec(0),\yvec(1),\yvec(2)$, where $\yvec(t)=[y_{1}(t),\ldots,y_{11}(t)]$.
\end{example}

\begin{remark}
Comparing the coefficient $(1-K)+\frac{2^{K}-1}{2^{K}}\log_{2}(2^{K}-1)$ in Theorem \ref{theorem2} with the coefficient $\frac{2^{K}-K-1}{2^{K}-1}$ in Theorem \ref{theorem}, we notice that $\left(\frac{2^{K}-K-1}{2^{K}-1}\right)n>\left[(1-K)+\frac{2^{K}-1}{2^{K}}\log_{2}(2^{K}-1)\right]n$ always holds regardless of $n$, and we illustrate it as follows (Fig~\ref{1}).

\begin{figure}[htb]
\centering
\includegraphics[width=0.6\textwidth]{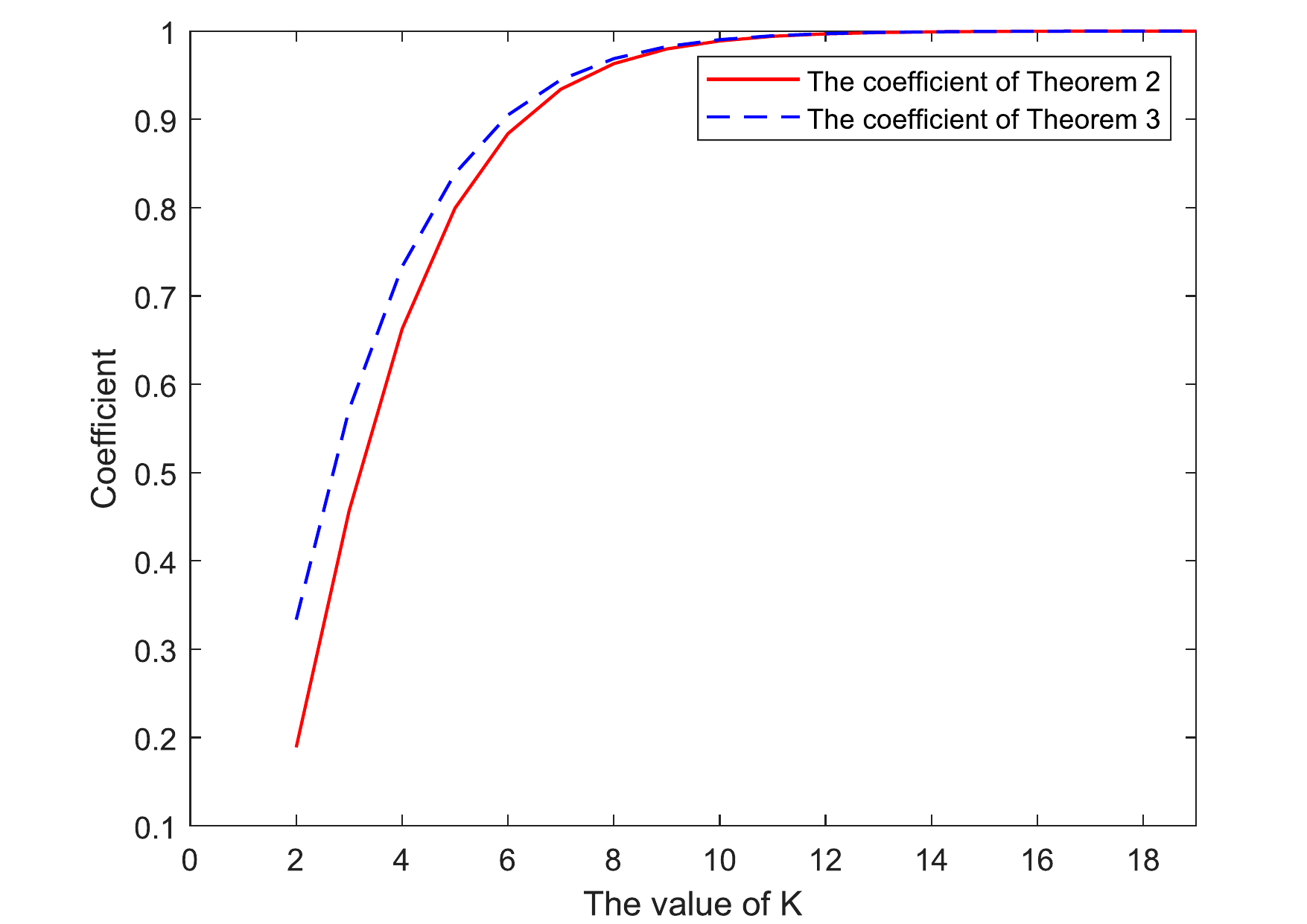}
\caption{Comparison between the coefficient $(1-K)+\frac{2^{K}-1}{2^{K}}\log_{2}(2^{K}-1)$ in Theorem \ref{theorem2} and the coefficient $\frac{2^{K}-K-1}{2^{K}-1}$ in Theorem \ref{theorem}. Here, we find that $\frac{2^{K}-K-1}{2^{K}-1}>(1-K)+\frac{2^{K}-1}{2^{K}}\log_{2}(2^{K}-1)$.}
\label{1}
\end{figure}

Let $I_{K}=(2^{K}-1)\left[K-\frac{2^{K}-1}{2^{K}}\log_{2}(2^K-1)\right]$. Then we can find that $I_{K}>K$ for any $K > 1$, as shown below.
\begin{eqnarray*}
I_K & = & (2^K-1) \left[K-\frac{2^{K}-1}{2^{K}}
\left(\log_2(1-{\frac{1}{2^K}}) + \log_2 (2^K)\right) \right]\\
& = & (2^K-1)\left[K - \frac{2^{K}-1}{2^{K}}\log_2(1-{\frac{1}{2^K}}) -
\frac{2^{K}-1}{2^{K}}\cdot K \right]\\
& = & (2^K-1)\left[{\frac{K}{2^K}} -
\frac{2^{K}-1}{2^{K}}\log_2(1-{\frac{1}{2^K}}) \right]\\
& \geq & (2^K-1)\left[{\frac{K}{2^K}} +
\frac{2^{K}-1}{2^{K}} \cdot {\frac{1}{\ln 2}} \cdot {\frac {1}{2^K}}
\right]\\
& = & K(1-{\frac{1}{2^K}}) + (2^K-1)\cdot\frac{2^{K}-1}{2^{K}} \cdot {\frac{1}{\ln2}}\cdot {\frac{1}{2^K}}\\
& > & K - {\frac{K}{2^K}} + 1.442(1-{\frac{1}{2^K}})^2 \\
& > & K,
\end{eqnarray*}
where $-\ln(1-x) \geq x,~0<x<1$. Hence, one has
\begin{eqnarray*}
K & < & (2^{K}-1)\left[K-\frac{2^{K}-1}{2^{K}}\log_{2}(2^K-1)\right]\\
1-\frac{K}{2^{K}-1} & > & 1-\left[K-\frac{2^{K}-1}{2^{K}}\log_{2}(2^K-1)\right],\\
\end{eqnarray*}
which means that $\left(\frac{2^{K}-K-1}{2^{K}-1}\right)n>\left[(1-K)+\frac{2^{K}-1}{2^{K}}\log_{2}(2^{K}-1)\right]n$ always holds regardless of $n$.
\end{remark}
\section{Boolean networks consisting of XOR functions}
In this section, we study the upper and lower bounds of the number of observation nodes $m$ in BNs consisting of XOR functions.

We first consider BNs in which an XOR function with two inputs is assigned to each node, which is referred to as a 2-XOR-BN.
\begin{proposition}\label{proposition7}
There exists a 2-XOR-BN that is observable with $m=1$.
\end{proposition}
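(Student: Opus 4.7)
The plan is to exhibit an explicit family of $2$-XOR-BNs on $n$ nodes (for every $n\geq 2$) that is observable using the single output $y(t)=x_{1}(t)$; the key point is that XOR is linear over $\{0,1\}$, so once the update matrix has a convenient shape, the recovery of $\xvec(0)$ from $y(\cdot)$ is purely algebraic.

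Concretely, I would take the BN defined by $x_{i}(t+1)=x_{i}(t)\lxor x_{i+1}(t)$ for $i=1,\ldots,n-1$ and $x_{n}(t+1)=x_{n}(t)\lxor x_{1}(t)$, together with $y(t)=x_{1}(t)$. Each transition function is an XOR of exactly two distinct literals, so this is a valid $2$-XOR-BN. Observability then reduces to showing that $\xvec(0)$ is uniquely recoverable from the sequence $y(0),y(1),\ldots,y(n-1)$.

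The main step is an induction on $i\in[1,n]$: I would show that for every $t\geq 0$, the value $x_{i}(t)$ can be written as an explicit XOR of some subset of $\{y(t),y(t+1),\ldots,y(t+i-1)\}$. The base case $i=1$ is just $x_{1}(t)=y(t)$. For the inductive step, the update relation $x_{i}(t+1)=x_{i}(t)\lxor x_{i+1}(t)$ (valid for $i<n$) rearranges to $x_{i+1}(t)=x_{i}(t)\lxor x_{i}(t+1)$; applying the induction hypothesis to $x_{i}(t)$ and $x_{i}(t+1)$ expresses $x_{i+1}(t)$ in terms of $\{y(t),\ldots,y(t+i)\}$, as required. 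Setting $t=0$ and letting $i$ range over $1,\ldots,n$ then writes every $x_{i}(0)$ as a known XOR of $y(0),\ldots,y(n-1)$, which recovers $\xvec(0)$ uniquely and establishes observability on $[0,n-1]$.

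The main modest obstacle is selecting an update matrix whose every row has weight exactly $2$ while still admitting full observability from the coordinate $x_{1}$; the tridiagonal-plus-corner pattern above makes the inductive recovery essentially automatic. I would also note that the $x_{n}$ equation plays no role in the recovery itself --- the first $n-1$ update equations alone pin down $x_{1}(0),\ldots,x_{n}(0)$ --- so the choice $x_{n}(t+1)=x_{n}(t)\lxor x_{1}(t)$ is included only to ensure that $f_{n}$ is a genuine $2$-literal XOR, and any other $2$-literal XOR of the state variables would serve equally well.
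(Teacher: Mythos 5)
Your construction is exactly the BN used in the paper's own proof ($x_{i}(t+1)=x_{i}(t)\lxor x_{i+1}(t)$ for $i<n$, $x_{n}(t+1)=x_{n}(t)\lxor x_{1}(t)$, $y(t)=x_{1}(t)$), and your induction recovering one coordinate at a time from each successive output is the same argument, merely phrased by inverting the recurrence to write $x_{i+1}(t)=x_{i}(t)\lxor x_{i}(t+1)$ rather than by expanding each $y(j)$ as an XOR of initial-state bits. The proposal is correct and takes essentially the same approach as the paper.
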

\begin{proof}
Consider the following BN:
\begin{eqnarray*}
x_{i}(t+1) & = & x_{i}(t)\oplus x_{i+1}(t),\quad \forall i\in[1,n-1],\\
x_{n}(t+1) & = & x_{n}(t)\oplus x_{1}(t),\\
y(t) & = & x_{1}(t).
\end{eqnarray*}

We first prove that if $\yvec(t;\xvec^{1}(0))=\yvec(t;\xvec^{2}(0)),~t\in[0,n-1]$, then $\xvec^{1}(0)=\xvec^{2}(0)$. In other words, if  $[y^{1}(0),y^{1}(1),\ldots,y^{1}(n-1)]=[y^{2}(0),y^{2}(1),\ldots,y^{2}(n-1)]$, then $[x_{1}^{1}(0),x_{2}^{1}(0),\ldots,x_{n}^{1}(0)]=[x_{1}^{2}(0),x_{2}^{2}(0),\ldots,x_{n}^{2}(0)]$.

Clearly, if $\yvec(0;\xvec^{1}(0))=\yvec(0;\xvec^{2}(0))$, we have $x_{1}^{1}(0)=x_{1}^{2}(0)$. Suppose that if $[y^{1}(0),y^{1}(1),\ldots,y^{1}(j-1)]=[y^{2}(0),y^{2}(1),\ldots,y^{2}(j-1)]$, we have $[x_{1}^{1}(0),x_{2}^{1}(0),\ldots,x_{j}^{1}(0)]=[x_{1}^{2}(0),x_{2}^{2}(0),\ldots,x_{j}^{2}(0)]$.

When $[y^{1}(0),y^{1}(1),\ldots,y^{1}(j)]=[y^{2}(0),y^{2}(1),\ldots,y^{2}(j)]$, because
\begin{eqnarray*}
y(1) & = & x_{1}(0)\oplus x_{2}(0)\\
y(2) & = & x_{1}(0)\oplus x_{3}(0)\\
y(3) & = & x_{1}(0)\oplus x_{2}(0)\oplus x_{3}(0)\oplus x_{4}(0)\\
y(4) & = & x_{1}(0)\oplus x_{5}(0)\\
y(5) & = & x_{1}(0)\oplus x_{2}(0)\oplus x_{5}(0)\oplus x_{6}(0)\\
y(6) & = & x_{1}(0)\oplus x_{3}(0)\oplus x_{5}(0)\oplus x_{7}(0)\\
 & \cdots & \\
y(j) & = & x_{1}(0)\oplus\cdots\oplus x_{j+1}(0)
\end{eqnarray*}
and $[x_{1}^{1}(0),x_{2}^{1}(0),\ldots,x_{j}^{1}(0)]=[x_{1}^{2}(0),x_{2}^{2}(0),\ldots,x_{j}^{2}(0)]$,
we have $x_{j+1}^{1}(0)=x_{j+1}^{2}(0)$.

Hence, if $\yvec(t;\xvec^{1}(0))=\yvec(t;\xvec^{2}(0)),~t\in[0,n-1]$, then $\xvec^{1}(0)=\xvec^{2}(0)$, which implies that any $\xvec(0)$ can be determined from $y(0),y(1),\ldots,y(n-1)$, where $y(t)=x_{1}(t)$.
\end{proof}

\begin{remark}
The above result does not contradict Proposition 20 in \cite{guo22} because we get 1 bit information at time $t=0$.
\end{remark}

\begin{remark}
Of course, there exists a 2-XOR-BN that is observable with $m>1$. For example, consider the BN specified by $x_{i}(t+1)=x_{i}(t)\oplus x_{i+1}(t)$ for odd $i$, $x_{i}(t+1)=x_{i-1}(t)\oplus x_{i}(t)$ for even $i$. Then, we need $\frac{1}{2}n$ observation nodes.
\end{remark}

\begin{remark}\label{remark4}
In the worst case, we need $m=n$ observation nodes for 2-XOR-BN so that this BN is observable. For example, consider the BN specified by $x_{i}(t+1)=x_{1}(t)\oplus x_{2}(t),~i\in[1,n]$. Then, we need $n$ observation nodes.
\end{remark}

Then we focus on $K$-XOR-BNs with arbitrary fixed $K>2$, and we first show that there exists a $K$-XOR-BN that is observable with $m=K$.

\begin{proposition}\label{proposition8}
There exists a $K$-XOR-BN that is observable with $m=K$, where $n=K+1$.
\end{proposition}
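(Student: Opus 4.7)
The plan is to exhibit an explicit $K$-XOR-BN on $n=K+1$ nodes and verify that observing any $K$ of its nodes suffices to distinguish all $2^{n}$ initial states. Since every Boolean function must be XOR of exactly $K$ literals from a total pool of $K+1$ variables, the natural choice is to let each node take the XOR of all other nodes, namely
\begin{eqnarray*}
x_{i}(t+1) & = & \bigoplus_{j\in[1,K+1],\ j\neq i} x_{j}(t),\quad\forall i\in[1,K+1],
\end{eqnarray*}
and to pick the observation nodes $y_{i}(t)=x_{i}(t)$ for $i\in[1,K]$, leaving $x_{K+1}$ unmeasured.

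The verification proceeds in two steps. First, from $\yvec(0)=[x_{1}(0),\ldots,x_{K}(0)]$ the values of $x_{1}(0),\ldots,x_{K}(0)$ are read off directly, so the only remaining unknown is $x_{K+1}(0)$, which carries exactly one bit. Second, look at any observation at time $t=1$, say $y_{1}(1)=x_{1}(1)=x_{2}(0)\oplus x_{3}(0)\oplus\cdots\oplus x_{K+1}(0)$. Rearranging this single linear equation over $\mathbb{F}_{2}$ yields
\begin{eqnarray*}
x_{K+1}(0) & = & y_{1}(1)\oplus x_{2}(0)\oplus x_{3}(0)\oplus\cdots\oplus x_{K}(0),
\end{eqnarray*}
whose right-hand side is entirely determined by $\yvec(0)$ and $y_{1}(1)$. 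Hence every initial state is recoverable from $\yvec(0),\yvec(1)$, establishing that the BN is observable on $[0,1]$ with $m=K$.

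There is no real obstacle here: the construction is rigid once one notices that $n=K+1$ forces every $K$-XOR function to omit exactly one variable, and choosing the canonical ``sum of all others'' function makes the missing bit $x_{K+1}(0)$ appear linearly in every $x_{i}(1)$ with $i\neq K+1$. The only thing worth being careful about is simply confirming that the equation for $x_{K+1}(0)$ is nondegenerate (the coefficient of $x_{K+1}(0)$ in $x_{1}(1)$ is $1$), which is immediate from the definition of the XOR function used.
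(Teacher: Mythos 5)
Your construction is exactly the one in the paper (each node is the XOR of the other $K$ nodes), but your verification is genuinely different and, in fact, cleaner. The paper restricts to odd $K$, shows that every $\xvec(0)$ with an even number of ones is a fixed point (giving $2^{K}$ fixed points, hence $m\geq K$ for this particular BN by the fixed-point argument that later becomes Proposition \ref{proposition9}), observes that the dynamics are $2$-periodic, and then reads observability off the state transition table. You instead treat the system as linear over $\mathbb{F}_{2}$: $\yvec(0)$ reveals $x_{1}(0),\ldots,x_{K}(0)$ directly, and the single equation $y_{1}(1)=x_{2}(0)\oplus\cdots\oplus x_{K+1}(0)$ is solved for the one missing bit $x_{K+1}(0)$ because its coefficient is $1$. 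This is correct, shows observability already on $[0,1]$, and works for every $K>2$ without the parity restriction the paper imposes on its construction. What your argument does not deliver (and is not required by the statement) is the complementary fact that $m=K$ is also necessary for this BN; the paper's fixed-point count supplies that and motivates the general lower-bound technique of Proposition \ref{proposition9}, which is the main reason its proof is organized the way it is.
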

\begin{proof}
Consider the following BN:
\begin{eqnarray*}
x_{1}(t+1) & = & x_{2}(t)\oplus x_{3}(t)\oplus x_{4}(t)\oplus\cdots\oplus x_{K+1}(t),\\
x_{2}(t+1) & = & x_{1}(t)\oplus x_{3}(t)\oplus x_{4}(t)\oplus\cdots\oplus x_{K+1}(t),\\
x_{3}(t+1) & = & x_{1}(t)\oplus x_{2}(t)\oplus x_{4}(t)\oplus\cdots\oplus x_{K+1}(t),\\
&\cdots&\\
x_{K}(t+1) & = & x_{1}(t)\oplus x_{2}(t)\oplus\cdots\oplus x_{K-1}(t)\oplus x_{K+1}(t),\\
x_{K+1}(t+1) & = & x_{1}(t)\oplus x_{2}(t)\oplus x_{3}(t)\oplus\cdots\oplus x_{K}(t),
\end{eqnarray*}
where $K$ is odd.

First, we prove that if the number of ones to $\xvec(0)=[x_{1}(0),x_{2}(0),\ldots,x_{K+1}(0)]$ is even, then $\xvec(0)=\xvec(1)$. Suppose that $x_{j_{1}}(0)=x_{j_{2}}(0)=\cdots=x_{j_{2k}}(0)=1$. Then for any $i\in\{j_{1},j_{2},\ldots,j_{2k}\}$, we have $x_{i}(1)=1$ (the input has an odd number of ones); for any $i\in\{1,2,\ldots,K+1\}\setminus\{j_{1},j_{2},\ldots,j_{2k}\}$, we have $x_{i}(1)=0$ (the input has an even number of ones). Hence, if the number of ones to $\xvec(0)=[x_{1}(0),x_{2}(0),\ldots,x_{K+1}(0)]$ is even, then $\xvec(0)=\xvec(1)$, i.e., there are $\binom{K+1}{0}+\binom{K+1}{2}+\cdots+\binom{K+1}{K+1}=2^{K}$ fixed points. Here, any fixed point $\xvec(0)$ can only be determined from $\yvec(0)$, which implies $2^{m}\geq2^{K}$, i.e., $m\geq K$.

Then we consider that the number of ones to $\xvec(0)=[x_{1}(0),x_{2}(0),\ldots,x_{K+1}(0)]$ is odd. Suppose that $x_{j_{1}}(0)=x_{j_{2}}(0)=\cdots=x_{j_{2k+1}}(0)=1$. Then for any $i\in\{j_{1},j_{2},\ldots,j_{2k+1}\}$, we have $x_{i}(1)=0$ (the input has an even number of ones); for any $i\in\{1,2,\ldots,K+1\}\setminus\{j_{1},j_{2},\ldots,j_{2k}\}$, we have $x_{i}(1)=1$ (the input has an odd number of ones).

Hence, for any initial state $\xvec(0)=[x_{1}(0),x_{2}(0),\ldots,x_{K+1}(0)]$, we have $\xvec(0)=\xvec(2)=\cdots=\xvec(2k)$ and $\xvec(1)=\xvec(3)=\cdots=\xvec(2k+1)$. Then, we have the following state transition table (Table \ref{table13}).
\begin{table}[ht!]
  \centering
  \caption{State transition table in Proposition \ref{proposition8}.}
  \label{table13}
\resizebox{\linewidth}{!}{
\begin{tabular}{|ccccccc|ccccccc|}
\hline
$x_{1}(0)$&$x_{2}(0)$&$x_{3}(0)$&$\cdots$&$x_{K-1}(0)$&$x_{K}(0)$&$x_{K+1}(0)$&$x_{1}(1)$&$x_{2}(1)$&$x_{3}(1)$&$\cdots$&$x_{K-1}(1)$&$x_{K}(1)$&$x_{K+1}(1)$\\           \hline
$0$&$0$&$0$&$\cdots$&$0$&$0$&$0$&$0$&$0$&$0$&$\cdots$&$0$&$0$&$0$\\
$0$&$0$&$0$&$\cdots$&$0$&$0$&$1$&$1$&$1$&$1$&$\cdots$&$1$&$1$&$0$\\
$0$&$0$&$0$&$\cdots$&$0$&$1$&$0$&$1$&$1$&$1$&$\cdots$&$1$&$0$&$1$\\
$0$&$0$&$0$&$\cdots$&$0$&$1$&$1$&$0$&$0$&$0$&$\cdots$&$0$&$1$&$1$\\
$\vdots$&$\vdots$&$\vdots$&$\ddots$&$\vdots$&$\vdots$&$\vdots$&$\vdots$&$\vdots$&$\vdots$&$\ddots$&$\vdots$&$\vdots$&$\vdots$\\
$1$&$1$&$1$&$\cdots$&$1$&$0$&$0$&$0$&$0$&$0$&$\cdots$&$0$&$1$&$1$\\
$1$&$1$&$1$&$\cdots$&$1$&$0$&$1$&$1$&$1$&$1$&$\cdots$&$1$&$0$&$1$\\
$1$&$1$&$1$&$\cdots$&$1$&$1$&$0$&$1$&$1$&$1$&$\cdots$&$1$&$1$&$0$\\
$1$&$1$&$1$&$\cdots$&$1$&$1$&$1$&$0$&$0$&$0$&$\cdots$&$0$&$0$&$0$\\
\hline
\end{tabular}}
\end{table}

Clearly, $\xvec(0)$ can be recovered from $\yvec(0),\yvec(1)$, where
\begin{eqnarray*}
y_{1}(t) & = & x_{1}(t),\\
y_{2}(t) & = & x_{2}(t),\\
&\cdots&\\
y_{K}(t) & = & x_{K}(t).
\end{eqnarray*}
\end{proof}

\begin{example}\label{example}
Consider the following BN:
\begin{eqnarray*}
x_{1}(t+1) & = & x_{2}(t)\oplus x_{3}(t)\oplus x_{4}(t),\\
x_{2}(t+1) & = & x_{1}(t)\oplus x_{3}(t)\oplus x_{4}(t),\\
x_{3}(t+1) & = & x_{1}(t)\oplus x_{2}(t)\oplus x_{4}(t),\\
x_{4}(t+1) & = & x_{1}(t)\oplus x_{2}(t)\oplus x_{3}(t).
\end{eqnarray*}
Then, we have the following state transition table (Table \ref{table14}).
\begin{table}[ht!]
  \centering
  \caption{State transition table in Example \ref{example}.}
  \label{table14}
\begin{tabular}{|cccc|cccc|cccc|}
\hline
$x_{1}(0)$&$x_{2}(0)$&$x_{3}(0)$&$x_{4}(0)$&$x_{1}(1)$&$x_{2}(1)$&$x_{3}(1)$&$x_{4}(1)$&$x_{1}(2)$&$x_{2}(2)$&$x_{3}(2)$&$x_{4}(2)$\\           \hline
$0$&$0$&$0$&$0$&$0$&$0$&$0$&$0$&$0$&$0$&$0$&$0$\\
$0$&$0$&$0$&$1$&$1$&$1$&$1$&$0$&$0$&$0$&$0$&$1$\\
$0$&$0$&$1$&$0$&$1$&$1$&$0$&$1$&$0$&$0$&$1$&$0$\\
$0$&$0$&$1$&$1$&$0$&$0$&$1$&$1$&$0$&$0$&$1$&$1$\\
$0$&$1$&$0$&$0$&$1$&$0$&$1$&$1$&$0$&$1$&$0$&$0$\\
$0$&$1$&$0$&$1$&$0$&$1$&$0$&$1$&$0$&$1$&$0$&$1$\\
$0$&$1$&$1$&$0$&$0$&$1$&$1$&$0$&$0$&$1$&$1$&$0$\\
$0$&$1$&$1$&$1$&$1$&$0$&$0$&$0$&$0$&$1$&$1$&$1$\\
$1$&$0$&$0$&$0$&$0$&$1$&$1$&$1$&$1$&$0$&$0$&$0$\\
$1$&$0$&$0$&$1$&$1$&$0$&$0$&$1$&$1$&$0$&$0$&$1$\\
$1$&$0$&$1$&$0$&$1$&$0$&$1$&$0$&$1$&$0$&$1$&$0$\\
$1$&$0$&$1$&$1$&$0$&$1$&$0$&$0$&$1$&$0$&$1$&$1$\\
$1$&$1$&$0$&$0$&$1$&$1$&$0$&$0$&$1$&$1$&$0$&$0$\\
$1$&$1$&$0$&$1$&$0$&$0$&$1$&$0$&$1$&$1$&$0$&$1$\\
$1$&$1$&$1$&$0$&$0$&$0$&$0$&$1$&$1$&$1$&$1$&$0$\\
$1$&$1$&$1$&$1$&$1$&$1$&$1$&$1$&$1$&$1$&$1$&$1$\\
\hline
\end{tabular}
\end{table}
If this BN is observable, then any $\xvec(0)=[x_{1}(0),x_{2}(0),x_{3}(0),x_{4}(0)]$ can be determined from $\yvec(0),\yvec(1)$ (because $\yvec(0)=\yvec(2)=\cdots=\yvec(2k),\yvec(1)=\yvec(3)=\cdots=\yvec(2k+1)$), where $\yvec(t)=[y_{1}(t),\ldots,y_{m}(t)]$.

For  $\xvec(0)=[0,0,0,0],[0,0,1,1],\ldots,[1,1,1,1]$ (8 cases), $\yvec(0)=\yvec(1)$. Hence, if this BN is observable, $2^{m}\geq8$, i.e., the number of observation nodes in this BN must not be less than 3.
We can construct
\begin{eqnarray*}
y_{1}(t) & = & x_{1}(t),\\
y_{2}(t) & = & x_{2}(t),\\
y_{3}(t) & = & x_{3}(t),
\end{eqnarray*}
such that this BN is observable. Hence, the minimum number of observation nodes in this BN is 3.
\end{example}

Now according to Proposition \ref{proposition8} and Example \ref{example}, we present a new technique to study the lower bounds of the number of observation nodes $m$, which is applicable to any BNs, as shown below.

\begin{proposition}\label{proposition9}
For any BN, if the number of its fixed points is $l$, then $m\geq\log_{2}l$ must hold so that the BN is observable.
\end{proposition}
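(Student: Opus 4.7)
The plan is to adapt the counting argument of Proposition \ref{proposition4} to the special situation where the collision among trajectories persists for all time, not just between time $0$ and time $1$. A fixed point is precisely a state $\xvec^{*}$ with $\xvec(1;\xvec^{*})=\xvec^{*}$, so by induction $\xvec(t;\xvec^{*})=\xvec^{*}$ for all $t\geq 0$, which forces $\yvec(t;\xvec^{*})=\yvec(0;\xvec^{*})$ for all $t\geq 0$.

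First I would enumerate the fixed points as $\xvec^{*,1},\ldots,\xvec^{*,l}$ and record that for each $i\in[1,l]$ the output sequence $\yvec(0;\xvec^{*,i}),\yvec(1;\xvec^{*,i}),\ldots$ is the constant sequence equal to $\yvec(0;\xvec^{*,i})$. Hence if the BN is observable on $[0,N]$ for some $N$, the only possibility to distinguish two fixed points $\xvec^{*,i}\neq \xvec^{*,j}$ is to have $\yvec(0;\xvec^{*,i})\neq \yvec(0;\xvec^{*,j})$; equality at time $0$ immediately propagates forever.

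Next I would use a pigeonhole step: observability requires the $l$ vectors $\yvec(0;\xvec^{*,1}),\ldots,\yvec(0;\xvec^{*,l})$ in $\{0,1\}^{m}$ to be pairwise distinct, and since $\{0,1\}^{m}$ has cardinality $2^{m}$, this forces $2^{m}\geq l$, i.e.\ $m\geq \log_{2}l$.

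I do not foresee any real obstacle: the argument is essentially the same pigeonhole as in Proposition \ref{proposition4}, with $COUNT(\avec^{j})$ replaced by $l$ and with the trivial but crucial observation that fixed points are indistinguishable from themselves at every $t\geq 1$. The only thing to be slightly careful about is to note that $N$ is arbitrary in the definition of observability, so we may not restrict attention to $\yvec(0)$ a priori; however, the constancy of the trajectory of a fixed point immediately collapses the information available from $\yvec(1),\ldots,\yvec(N)$ to nothing beyond $\yvec(0)$, which closes the argument.
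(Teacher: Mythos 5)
Your proposal is correct and is essentially the paper's own argument: the paper's (very terse) proof likewise observes that a fixed point can only be determined from $\yvec(0)$, since its trajectory and hence its output sequence is constant, and then applies the same pigeonhole bound $2^{m}\geq l$. Your write-up simply makes explicit the induction $\xvec(t;\xvec^{*})=\xvec^{*}$ and the resulting constancy of the outputs, which the paper leaves implicit.
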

\begin{proof}
For any fixed point $\xvec(0)$, it can only be determined from $\yvec(0)$, which implies $2^{m}\geq l$, i.e., $m\geq\log_{2}l$.
\end{proof}

\begin{proposition}\label{proposition10}
There exists a $K$-XOR-BN that is observable with $m=\frac{K}{K+1}n$, where $K$ is odd and $(n\mod K+1)=0$.
\end{proposition}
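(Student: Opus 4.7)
The plan is to reduce Proposition \ref{proposition10} to Proposition \ref{proposition8} by a disjoint-copies construction, exactly in the spirit of how Proposition \ref{proposition2} extends from $n=3$ to general $n$ divisible by $3$, and how Theorem \ref{theorem} extends from $n=2^K-1$ to general multiples of $2^K-1$. Since $K$ is odd and $(n \bmod (K+1))=0$, I would set $q = n/(K+1)$ and partition the index set $[1,n]$ into $q$ consecutive blocks $B_1,\ldots,B_q$ of size $K+1$.

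Next, on each block $B_s = \{(s-1)(K+1)+1,\ldots,s(K+1)\}$ I would install an independent copy of the $K$-XOR-BN from Proposition \ref{proposition8}, i.e., for each $i \in B_s$,
\begin{eqnarray*}
x_i(t+1) & = & \bigoplus_{j \in B_s,\, j\neq i} x_j(t).
\end{eqnarray*}
Because the update rule for block $B_s$ depends only on the $K+1$ variables in $B_s$, each block evolves autonomously: the trajectory of the coordinates in $B_s$ is exactly that of the $(K+1)$-node BN of Proposition \ref{proposition8}, regardless of the states in the other blocks. Then I would pick the observation nodes block by block: within each block $B_s$, take the $K$ observation nodes prescribed by Proposition \ref{proposition8} (say, all indices of $B_s$ except the last one). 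The total number of observation nodes is $q \cdot K = \frac{K}{K+1}n$.

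To verify observability, I would note that by Proposition \ref{proposition8} the $K$ observations within block $B_s$ over the time interval $[0,1]$ uniquely determine the initial state restricted to $B_s$. Since the blocks are disjoint and evolve independently, the union of these $q$ local reconstructions recovers the entire initial state $\xvec(0) \in \{0,1\}^n$. Hence any two distinct initial states yield distinct output sequences, and the constructed $K$-XOR-BN is observable with $m=\frac{K}{K+1}n$ observation nodes. There is no genuine obstacle here; the only thing to be careful about is to invoke the oddness of $K$ (needed for Proposition \ref{proposition8}) and the divisibility condition $(n \bmod (K+1))=0$ (needed to tile $[1,n]$ exactly by blocks of size $K+1$), both of which are already in the hypothesis.
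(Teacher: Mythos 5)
Your proposal is correct and is essentially identical to the paper's own proof, which simply states that it suffices to take $\frac{n}{K+1}$ disjoint copies of the BN from Proposition \ref{proposition8}; you have merely spelled out the block decomposition and the independence argument in more detail.
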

\begin{proof}
It is enough to make $\frac{n}{K+1}$ copies of the BN shown in Proposition \ref{proposition8}.
\end{proof}

\begin{proposition}\label{proposition11}
There exists a $K$-XOR-BN with arbitrary fixed $K>2$ that is observable with $m=1$.
\end{proposition}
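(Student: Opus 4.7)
I would generalize the cyclic 2-XOR construction of Proposition~\ref{proposition7}: consider the $K$-XOR-BN
\begin{eqnarray*}
x_i(t+1) &=& x_i(t)\oplus x_{i+1}(t)\oplus\cdots\oplus x_{i+K-1}(t),\qquad i\in[1,n],
\end{eqnarray*}
with indices modulo $n$, paired with the single observation $y(t)=x_1(t)$, and argue that for a suitably chosen $n$ (depending on $K$) this BN is observable. Each $f_i$ is an XOR of $K$ distinct literals, so the construction is a valid $K$-XOR-BN.

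Since XOR is linear over the two-element field, the dynamics is of the form $\xvec(t)=A^t\xvec(0)$ for the $n\times n$ circulant matrix $A$ whose first row has $K$ leading ones and zeros elsewhere. Observability from $y(t)=x_1(t)$ is equivalent to the row vectors $e_1^T,\,e_1^T A,\,\ldots,\,e_1^T A^{n-1}$ being linearly independent, which after identifying $\{0,1\}^n$ with the quotient ring $R=\{0,1\}[z]/(z^n-1)$ translates to the polynomial powers $1,p(z),p(z)^2,\ldots,p(z)^{n-1}$ spanning $R$, where $p(z)=1+z+\cdots+z^{K-1}$ is the symbol of $A$. Via the Chinese Remainder Theorem $R$ factors as a product of finite extensions of $\{0,1\}$ indexed by the irreducible factors of $z^n-1$, and the spanning condition reduces to $p(z)$ generating each residue field and to the reductions of $p(z)$ in different factors being pairwise distinct.

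The hard part will be verifying this algebraic condition uniformly in $K$. For a given $K$ I would try to pick $n$ to be a prime such that $2$ is a primitive root modulo $n$, so that $z^n-1=(z-1)q(z)$ with $q(z)$ irreducible of degree $n-1$, and then check that $p(z)$ reduced modulo $q(z)$ has order large enough to generate the field and that $p(1)$ differs from that residue; this handles small cases (for example $K=3,n=5$ and $K=4,n=5$) by a direct rank computation of the observability matrix. A potential obstruction is that for even $K$ one has $p(1)=0$, forcing a degeneracy at the $(z-1)$-factor; in that case I would break cyclic symmetry by modifying one or two of the $f_i$'s to use different XORs of $K$ literals, chosen to supply the missing coefficient vectors, and then verify observability either by a rank computation of the resulting observability matrix or by an inductive reconstruction of $x_1(0),x_2(0),\ldots,x_n(0)$ from successive outputs $y(0),y(1),\ldots$ in the style of Proposition~\ref{proposition7}.
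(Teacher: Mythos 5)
Your reduction of observability to the rank of the matrix $[e_1^T; e_1^TA;\ldots;e_1^TA^{n-1}]$ over $\mathrm{GF}(2)$ is sound, and the translation to the subalgebra generated by $p(z)=1+z+\cdots+z^{K-1}$ in $\mathrm{GF}(2)[z]/(z^n-1)$ is the right way to analyze the circulant construction. But the proposal stops exactly where the proof is needed: the statement asserts existence for \emph{every} fixed $K>2$, and you never establish, for general $K$, that a suitable $n$ exists and that the algebraic condition holds for it. You acknowledge this yourself (``the hard part will be verifying this algebraic condition uniformly in $K$'') and only verify two small cases by rank computation. The difficulty is not hypothetical. First, your recipe asks for a prime $n\geq K$ with $2$ a primitive root modulo $n$; the availability of such primes at all scales is essentially Artin's conjecture, so the recipe is not even guaranteed to produce a candidate $n$ for every $K$ without further argument. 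Second, even when $z^n-1$ factors nicely, the condition can genuinely fail: for $K=3$, $n=7$ one has $z^7-1=(z+1)(z^3+z+1)(z^3+z^2+1)$, and the image of $p(z)=1+z+z^2$ in \emph{both} cubic residue fields has the same minimal polynomial $z^3+z^2+1$, so the observability matrix has rank $4<7$ and the BN is not observable from $x_1$. So the unverified step is a real obstruction, not a formality, and the fallback of ``modifying one or two of the $f_i$'s'' is left entirely unspecified. As written, this is a plausible research plan rather than a proof.

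For contrast, the paper sidesteps all of the number theory by fixing $n=K+1$ and taking each $x_i(t+1)$ to be the XOR of $K$ of the $K+1$ variables (all variables except one, in a specific pattern), with $y(t)=x_1(t)$. It then writes out $y(0),\ldots,y(K)$ as explicit linear forms in $\xvec(0)$ and checks directly, in two cases according to the parity of $K$, that these $K+1$ forms determine $\xvec(0)$, i.e., that the map $\xvec(0)\mapsto[y(0),\ldots,y(K)]$ is injective. That is the same linear-algebraic criterion you invoke, but applied to a single explicit family where the injectivity can be verified by elementary elimination for every $K>2$ at once. If you want to salvage your approach, you would need either a uniform proof that $p(z)$ generates the relevant residue ring for some explicitly constructed $n=n(K)$, or to switch to a concrete non-circulant family (such as the paper's) where the check closes.
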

\begin{proof}
Consider the following $K$-XOR-BN:
\begin{eqnarray*}
x_{1}(t+1) & = & x_{1}(t)\oplus x_{2}(t)\oplus x_{3}(t)\oplus\cdots\oplus x_{K-3}(t)\oplus x_{K-2}(t)\oplus x_{K-1}(t)\oplus x_{K+1}(t)\\
x_{2}(t+1) & = & x_{2}(t)\oplus x_{3}(t)\oplus x_{4}(t)\oplus\cdots\oplus x_{K-2}(t)\oplus x_{K-1}(t)\oplus x_{K}(t)\oplus x_{K+1}(t)\\
x_{3}(t+1) & = & x_{1}(t)\oplus x_{3}(t)\oplus x_{4}(t)\oplus\cdots\oplus x_{K-2}(t)\oplus x_{K-1}(t)\oplus x_{K}(t)\oplus x_{K+1}(t)\\
& \cdots & \\
x_{K-1}(t+1) & = & x_{1}(t)\oplus x_{2}(t)\oplus x_{3}(t)\oplus\cdots\oplus x_{K-3}(t)\oplus x_{K-1}(t)\oplus x_{K}(t)\oplus x_{K+1}(t)\\
x_{K}(t+1) & = & x_{1}(t)\oplus x_{2}(t)\oplus x_{3}(t)\oplus\cdots\oplus x_{K-3}(t)\oplus x_{K-2}(t)\oplus x_{K}(t)\oplus x_{K+1}(t)\\
x_{K+1}(t+1) & = & x_{1}(t)\oplus x_{2}(t)\oplus x_{3}(t)\oplus\cdots\oplus x_{K-3}(t)\oplus x_{K-2}(t)\oplus x_{K-1}(t)\oplus x_{K}(t)\\
y(t) & = & x_{1}(t).
\end{eqnarray*}
Then when $K$ is even, we have
\begin{eqnarray*}
y(0) & = & x_{1}(0),\\
y(1) & = & x_{1}(0)\oplus x_{2}(0)\oplus\cdots\oplus x_{K-2}(0)\oplus x_{K-1}(0)\oplus x_{K+1}(0),\\
y(2) & = & x_{1}(0)\oplus x_{2}(0)\oplus\cdots\oplus x_{K-2}(0)\oplus x_{K}(0)\oplus x_{K+1}(0),\\
y(3) & = & x_{1}(0)\oplus x_{2}(0)\oplus\cdots\oplus x_{K-3}(0)\oplus x_{K-1}(0)\oplus x_{K}(0)\oplus x_{K+1}(0),\\
y(4) & = & x_{1}(0)\oplus x_{2}(0)\oplus\cdots\oplus x_{K-4}(0)\oplus x_{K-2}(0)\oplus x_{K-1}(0)\oplus x_{K}(0)\oplus x_{K+1}(0),\\
& \cdots & \\
y(K) & = & x_{2}(0)\oplus x_{3}(0)\oplus\cdots\oplus x_{K-2}(0)\oplus x_{K-1}(0)\oplus x_{K}(0)\oplus x_{K+1}(0),
\end{eqnarray*}
where $x_{1}(t)\oplus x_{2}(t)=x_{1}(t-1)\oplus x_{K}(t-1)$ and $x_{i}(t)\oplus x_{i+1}(t)=x_{i-1}(t-1)\oplus x_{i}(t-1),~i\in[3,K-1]$.

Suppose there are two different initial states $\xvec^{1}(0),\xvec^{2}(0)$, their corresponding outputs
\begin{eqnarray*}
[y^{1}(0),y^{1}(1),\ldots,y^{1}(K)]=[y^{2}(0),y^{2}(1),\ldots,y^{2}(K)].
\end{eqnarray*}
Here, $x^{1}_{1}(0)=x^{2}_{1}(0)$. According to $y^{1}(K-l+1)=y^{2}(K-l+1),~l\in[2:K]$, i.e., $x_{1}^{1}(0)\oplus\cdots\oplus x_{l-1}^{1}(0)\oplus x_{l+1}^{1}(0)\oplus\cdots\oplus x_{K+1}^{1}(0)=x_{1}^{2}(0)\oplus\cdots\oplus x_{l-1}^{2}(0)\oplus x_{l+1}^{2}(0)\oplus\cdots\oplus x_{K+1}^{2}(0)$ (there are an even (resp. odd) number of ones in $x_{1}^{1}(0),\ldots,x_{l-1}^{1}(0),x_{l+1}^{1}(0),\ldots,x_{K+1}^{1}(0)$ and an even (resp. odd) number of ones in $x_{1}^{2}(0),\ldots,x_{l-1}^{2}(0),x_{l+1}^{2}(0),\ldots,x_{K+1}^{2}(0)$) and $y^{1}(K)=y^{2}(K)$, we have $x^{1}_{l}(0)=x^{2}_{l}(0),~l\in[2:K]$. Based on $x^{1}_{l}(0)=x^{2}_{l}(0),~l\in[1:K]$ and $y^{1}(1)=y^{2}(1)$, we have $x^{1}_{K+1}(0)=x^{2}_{K+1}(0)$. In other words, if $[y^{1}(0),y^{1}(1),\ldots,y^{1}(K)]=[y^{2}(0),y^{2}(1),\ldots,y^{2}(K)]$, then $\xvec^{1}(0)=\xvec^{2}(0)$, which contradicts the hypothesis. Hence, $\xvec(0)$ can be recovered from $y(0),y(1),\ldots,y(K)$.

Similarly, when $K$ is odd, we have
\begin{eqnarray*}
y(0) & = & x_{1}(0),\\
y(1) & = & x_{1}(0)\oplus x_{2}(0)\oplus\cdots\oplus x_{K-2}(0)\oplus x_{K-1}(0)\oplus x_{K+1}(0),\\
y(2) & = & x_{K-1}(0),\\
y(3) & = & x_{1}(0)\oplus x_{2}(0)\oplus\cdots\oplus x_{K-3}(0)\oplus x_{K-1}(0)\oplus x_{K}(0)\oplus x_{K+1}(0),\\
y(4) & = & x_{K-3}(0),\\
& \cdots &\\
y(K-1) & = & x_{2}(0),\\
y(K) & = & x_{2}(0)\oplus x_{3}(0)\oplus\cdots\oplus x_{K-2}(0)\oplus x_{K-1}(0)\oplus x_{K}(0)\oplus x_{K+1}(0).
\end{eqnarray*}
Suppose there are two different initial states $\xvec^{1}(0),\xvec^{2}(0)$, their corresponding outputs
\begin{eqnarray*}
[y^{1}(0),y^{1}(1),\ldots,y^{1}(K)]=[y^{2}(0),y^{2}(1),\ldots,y^{2}(K)].
\end{eqnarray*}
It is clear that $x^{1}_{1}(0)=x^{2}_{1}(0)$ and $x^{1}_{2k}(0)=x^{2}_{2k}(0),~k\in[1,\frac{K-1}{2}]$. According to $y^{1}(K)=y^{2}(K)$ and $y^{1}(K-2k)=y^{2}(K-2k)$, we have $x^{1}_{2k+1}(0)=x^{2}_{2k+1}(0),~k\in[1,\frac{K-1}{2}]$. Based on $x^{1}_{l}(0)=x^{2}_{l}(0),~l\in[1:K]$ and $y^{1}(1)=y^{2}(1)$, we have $x^{1}_{K+1}(0)=x^{2}_{K+1}(0)$. In other words, if $[y^{1}(0),y^{1}(1),\ldots,y^{1}(K)]=[y^{2}(0),y^{2}(1),\ldots,y^{2}(K)]$, then $\xvec^{1}(0)=\xvec^{2}(0)$, which contradicts the hypothesis. Hence, $\xvec(0)$ can be recovered from $y(0),y(1),\ldots,y(K)$.
\end{proof}

\begin{example}\label{exmp1}
Consider the following BN:
\begin{eqnarray*}
x_{1}(t+1) & = & x_{1}(t)\oplus x_{2}(t)\oplus x_{4}(t),\\
x_{2}(t+1) & = & x_{2}(t)\oplus x_{3}(t)\oplus x_{4}(t),\\
x_{3}(t+1) & = & x_{1}(t)\oplus x_{3}(t)\oplus x_{4}(t),\\
x_{4}(t+1) & = & x_{1}(t)\oplus x_{2}(t)\oplus x_{3}(t),\\
y(t) & = & x_{1}(t).
\end{eqnarray*}
Then, we have the following state transition table (Table \ref{table15}).
\begin{table}[ht!]
  \centering
  \caption{State transition table in Example \ref{exmp1}.}
  \label{table15}
\resizebox{\linewidth}{!}{
\begin{tabular}{|cccc|cccc|cccc|cccc|}
\hline
$x_{1}(0)$&$x_{2}(0)$&$x_{3}(0)$&$x_{4}(0)$&$x_{1}(1)$&$x_{2}(1)$&$x_{3}(1)$&$x_{4}(1)$&$x_{1}(2)$&$x_{2}(2)$&$x_{3}(2)$&$x_{4}(2)$&$x_{1}(3)$&$x_{2}(3)$&$x_{3}(3)$&$x_{4}(3)$\\           \hline
$0$&$0$&$0$&$0$&$0$&$0$&$0$&$0$&$0$&$0$&$0$&$0$&$0$&$0$&$0$&$0$\\
$0$&$0$&$0$&$1$&$1$&$1$&$1$&$0$&$0$&$0$&$0$&$1$&$1$&$1$&$1$&$0$\\
$0$&$0$&$1$&$0$&$0$&$1$&$1$&$1$&$0$&$1$&$0$&$0$&$1$&$1$&$0$&$1$\\
$0$&$0$&$1$&$1$&$1$&$0$&$0$&$1$&$0$&$1$&$0$&$1$&$0$&$0$&$1$&$1$\\
$0$&$1$&$0$&$0$&$1$&$1$&$0$&$1$&$1$&$0$&$0$&$0$&$1$&$0$&$1$&$1$\\
$0$&$1$&$0$&$1$&$0$&$0$&$1$&$1$&$1$&$0$&$0$&$1$&$0$&$1$&$0$&$1$\\
$0$&$1$&$1$&$0$&$1$&$0$&$1$&$0$&$1$&$1$&$0$&$0$&$0$&$1$&$1$&$0$\\
$0$&$1$&$1$&$1$&$0$&$1$&$0$&$0$&$1$&$1$&$0$&$1$&$1$&$0$&$0$&$0$\\
$1$&$0$&$0$&$0$&$1$&$0$&$1$&$1$&$0$&$0$&$1$&$0$&$0$&$1$&$1$&$1$\\
$1$&$0$&$0$&$1$&$0$&$1$&$0$&$1$&$0$&$0$&$1$&$1$&$1$&$0$&$0$&$1$\\
$1$&$0$&$1$&$0$&$1$&$1$&$0$&$0$&$0$&$1$&$1$&$0$&$1$&$0$&$1$&$0$\\
$1$&$0$&$1$&$1$&$0$&$0$&$1$&$0$&$0$&$1$&$1$&$1$&$0$&$1$&$0$&$0$\\
$1$&$1$&$0$&$0$&$0$&$1$&$1$&$0$&$1$&$0$&$1$&$0$&$1$&$1$&$0$&$0$\\
$1$&$1$&$0$&$1$&$1$&$0$&$0$&$0$&$1$&$0$&$1$&$1$&$0$&$0$&$1$&$0$\\
$1$&$1$&$1$&$0$&$0$&$0$&$0$&$1$&$1$&$1$&$1$&$0$&$0$&$0$&$0$&$1$\\
$1$&$1$&$1$&$1$&$1$&$1$&$1$&$1$&$1$&$1$&$1$&$1$&$1$&$1$&$1$&$1$\\
\hline
\end{tabular}}
\end{table}
Clearly, $\xvec(0)$ can be recovered from $y(0),y(1),y(2),y(3)$.
\end{example}

\section{Boolean networks consisting of nested canalyzing functions}
In this section, we study the number of observation nodes in BNs consisting of nested canalyzing functions, and we refer to these BNs, where each Boolean function is a nested canalyzing function of $K$ literals, as $K$-NC-BNs.

\begin{theorem}\label{theorem5}
For any $n$ and $K>2$,
there exists a BN satisfying
(i) for each node, a nested canalyzing function
with $K$ input nodes is assigned, and
(ii) the number of observation nodes is $\lceil \frac{n}{K} \rceil$.
\end{theorem}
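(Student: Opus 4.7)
My plan is to establish the upper bound by an explicit modular construction, in the same spirit as Proposition \ref{proposition2} and Theorem \ref{theorem}. The strategy has two stages: (i) construct a \emph{basic block}, a $K$-NC-BN on exactly $K$ nodes that is observable with a single observation node $y(t)=x_1(t)$; and (ii) assemble the BN on $n$ nodes by taking $\lfloor n/K\rfloor$ disjoint copies of the basic block and absorbing the remainder $r=n-K\lfloor n/K\rfloor$ into at most one additional block, so that the total number of observation nodes is exactly $\lceil n/K\rceil$.

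For the basic block I would choose $K$ nested canalyzing functions $f_1,\ldots,f_K$, each depending on all of $x_1,\ldots,x_K$, so that the observed sequence $x_1(0),x_1(1),\ldots,x_1(T)$, for some $T$ of order $K$, separates all $2^K$ initial states. A natural design is to exploit the nested form $\ell_1\vee\cdots\vee\ell_{k_1-1}\vee(\ell_{k_1}\wedge\cdots)$ so that on the non-canalyzed ``generic'' part of the state space $x_1(t+1)$ encodes a fresh coordinate of $\xvec(0)$, in the spirit of the shift-register argument used for 2-XOR in Proposition \ref{proposition7}; the canalyzed ``exceptional'' initial states, which collapse immediately, are then handled by verifying that they are already distinguishable from their collision partners at time $0$ or at the first moment the canalyzing literal fires. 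For the tiling step, when $n\equiv 0\pmod K$ we glue $n/K$ independent copies of the basic block and place one observation per copy; since the blocks have disjoint variable sets, observability of the whole BN is inherited coordinate-wise from each block. When $0<r<K$ we either append an enlarged block of size $K+r$ that is still a $K$-NC-BN observable with a single node, or merge the leftover nodes into one existing block while preserving the NC form and single-node observability. Either variant yields exactly $\lceil n/K\rceil$ observation nodes.

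The main obstacle is the basic-block construction itself. Nested canalyzing functions are strongly biased: once a canalyzing literal fires, the function output is forced, so many distinct initial states collapse to the same successor state, and the trajectory of $x_1$ can therefore lose information very quickly. The combinatorial task is to choose the nesting order, the polarities of the $\ell_i$'s, and the alternation pattern of $\vee$ and $\wedge$ across $f_1,\ldots,f_K$ so that any two distinct initial states are eventually separated by some $x_1(t)$. Proving this injectivity for a concrete parametrized family is likely the technical heart of the theorem; a convenient route is an induction on $K$ showing that two initial states agreeing on $x_1(0),\ldots,x_1(t)$ must already agree on the first $t+1$ coordinates of $\xvec(0)$, or alternatively to bound $\max_{\avec^j}COUNT(\avec^j)$ via Proposition \ref{proposition4} and use the value of $x_1(0)$ itself to resolve the residual ambiguity.
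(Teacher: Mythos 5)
Your high-level plan coincides with the paper's: a basic block on $K$ nodes that acts as a shift register observable from a single node, tiled $\lceil n/K\rceil$ times. But the proposal has a genuine gap exactly where you flag it: the explicit basic block is never constructed, and constructing it \emph{is} the content of the theorem. The paper's block is
$x_i(t+1) = x_{i+1}(t) \lor \bigl(\lnon{x_i(t)}\land \lnon{x_{i+2}(t)}\land\cdots\land \lnon{x_K(t)}\bigr)$
(indices cyclic), a $K$-input nested canalyzing function whose first literal realizes a pure cyclic shift; the exception term fires only on the all-zeros state, so for every $\xvec(0)\neq[0,\ldots,0]$ one reads off $\xvec(0)=[y_1(0),\ldots,y_1(K-1)]$ directly, and the all-zeros state is separated from its unique collision partner $[0,1,\ldots,1]$ by observing one extra time step. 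No induction on $K$ or $COUNT$-based argument is needed once this is written down; your worry that canalyzation ``collapses many states'' is resolved by arranging the canalyzed branch to be reachable from essentially one state. Without exhibiting such a family, the proposal is a plan rather than a proof.

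A secondary issue is the remainder $r=n-K\lfloor n/K\rfloor>0$. Your first variant (an enlarged block of $K+r$ nodes that is still a $K$-NC-BN observable with one node) does not follow from the basic-block idea: with only $K$ inputs per function in a block of $K+r$ variables, the exception term cannot see all other block variables, so it fires on many states and the shift-register reading breaks. The paper instead makes the last block \emph{smaller} (with $p<K$ of its own variables) and pads each function up to $K$ relevant inputs by conjoining $q=K-p$ positive literals borrowed from the first block into the exception term; this keeps the cyclic-shift behaviour on the last block intact whenever that block is not all zero, and the all-zero case is again handled by the tail behaviour of $y_r(t)$. Your second variant (merging leftovers into an existing block) is closer in spirit but would still need this kind of explicit padding argument to preserve both the $K$-input NC form and single-node observability.
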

\begin{proof}
First, we consider the case of $K=n$, where $n>2$.
In this case, we construct the BN defined by
\begin{eqnarray*}
x_1(t+1) & = & x_2(t) \lor (\lnon{x_1(t)} \land \lnon{x_3(t)} \land \lnon{x_4(t)} \land \cdots \land \lnon{x_n(t)}),\\
x_2(t+1) & = & x_3(t) \lor (\lnon{x_1(t)} \land \lnon{x_2(t)} \land \lnon{x_4(t)} \land \cdots \land \lnon{x_n(t)}),\\
& & \cdots \\
x_{n-1}(t+1) & = & x_{n}(t) \lor (\lnon{x_1(t)} \land \lnon{x_2(t)} \land \cdots \land \lnon{x_{n-2}(t)} \land \lnon{x_{n-1}(t)}),\\
x_{n}(t+1) & = & x_{1}(t) \lor (\lnon{x_2(t)} \land \lnon{x_3(t)} \land \lnon{x_4(t)} \land \cdots \land \lnon{x_n(t)}).
\end{eqnarray*}
Let $y_1(t) = x_1(t)$ be the observation node.
Then, it is straightforward to see:
$$
\xvec(0)=[y_1(0),y_1(1),\ldots,y_1(n-1)]
$$
unless $\xvec(0)=[0,0,\ldots,0]$.

For the case of $\xvec(0)=[0,0,\ldots,0]$, we have
$$
[y_1(0),y_1(1),\ldots,y_1(n-1),y_1(n)]=[0,1,\ldots,1,1].
$$
On the other hand, if $\xvec(0)=[0,1,\ldots,1]$, we have
$$
[y_1(0),y_1(1),\ldots,y_1(n-1),y_1(n)]=[0,1,\ldots,1,0].
$$
Therefore, we can discriminate $\xvec(0)=[0,0,\ldots,0]$
from all other initial states.

Next, we consider the case of $(n \mod K)=0$.
In this case, it is enough to partition $[x_1,x_2,\ldots,x_n]$
into $\frac{n}{K}$ blocks: $[x_1,x_2,\ldots,x_K],[x_{K+1},x_{K+2},\ldots,x_{2K}],\ldots$.
We construct a partial BN for each block as in the first case,
then merge these BNs,
and let
$$
[y_1(t),y_2(t),\ldots,y_{\frac{n}{K}}(t)]=[x_1(t),x_{K+1}(t),\ldots,x_{n-K+1}(t)].
$$
Clearly, this BN is observable.

Finally, we consider the case of $(n \mod K) \neq 0$.
As in the second case, we partition
$[x_1,x_2,\ldots,x_n]$
into $\lceil \frac{n}{K} \rceil$ blocks and
construct a partial BN for each block as in the first case
except for the last block.
Let $x_h,\ldots,x_n$ be the variables in the last block,
where $h = K (\lceil \frac{n}{K} \rceil-1)+1$.
Let $p=n-h+1$ and $q=K-p$.
Note that the last block contains $p$ variables.
Then, we construct a partial BN for the last block as below.
\begin{eqnarray*}
x_{h}(t+1) & = & x_{h+1}(t) \lor (\lnon{x_{h}(t)} \land \lnon{x_{h+2}(t)} \land \cdots \land \lnon{x_n(t)} \land
x_1(t) \land x_2(t) \land \cdots \land x_{q}(t)),\\
x_{h+1}(t+1) & = & x_{h+2}(t) \lor (\lnon{x_{h}(t)} \land \lnon{x_{h+1}(t)} \land \cdots \land \lnon{x_n(t)} \land
x_1(t) \land x_2(t) \land \cdots \land x_{q}(t)),\\
& & \cdots\\
x_{n}(t+1) & = & x_{h}(t) \lor (\lnon{x_{h+1}(t)} \land \lnon{x_{h+2}(t)} \land \cdots \land \lnon{x_n(t)} \land
x_1(t) \land x_2(t) \land \cdots \land x_{q}(t)).
\end{eqnarray*}
We define the observation nodes by
$$
[y_1(t),y_2(t),\ldots,y_r(t)]
=
[x_1(t),x_{K+1}(t),\ldots,x_{h}(t)],$$
where $r = \lceil \frac{n}{K} \rceil$.

It is straightforward to see that
if $[x_h(0),x_{h+1}(0),\ldots,x_n(0)] \neq [0,0,\ldots,0]$,
we have
\begin{eqnarray*}
& & [y_{r}(0),y_{r}(1),\ldots,y_{r}(p-1),y_{r}(p),y_{r}(p+1),\ldots,y_{r}(2p-1),\ldots]\\
&=&
[x_{h}(0),x_{h+1}(0),\ldots,x_{n}(0),x_{h}(0),x_{h+1}(0),\ldots,x_{n}(0),\ldots].
\end{eqnarray*}
If $[x_h(0),x_{h+1}(0),\ldots,x_n(0)] = [0,0,\ldots,0]$,
then $y_{r}(0)=0$ holds, and either $y_{r}(t)=0$ holds for all $t>0$,
or $y_{r}(t)=1$ holds for all $t > t'$ for some $t'$.
Thus, all distinct $[x_h(0),\ldots,x_n(0)]$'s can be discriminated
by observing $y_r(t)$.
Therefore, this BN is also observable.
\end{proof}

\begin{example}
Consider the case of $n=11$ and $K=4$.
In this case, $h=9$, $p=3$, $q=1$, and $r=3$.
The BN for the first block is constructed by
\begin{eqnarray*}
x_{1}(t+1) & = & x_{2}(t) \lor (\lnon{x_{1}(t)} \land \lnon{x_{3}(t)} \land
\lnon{x_{4}(t)}),\\
x_{2}(t+1) & = & x_{3}(t) \lor (\lnon{x_{1}(t)} \land \lnon{x_{2}(t)} \land
\lnon{x_{4}(t)}),\\
x_{3}(t+1) & = & x_{4}(t) \lor (\lnon{x_{1}(t)} \land \lnon{x_{2}(t)} \land
\lnon{x_{3}(t)}),\\
x_{4}(t+1) & = & x_{1}(t) \lor (\lnon{x_{2}(t)} \land \lnon{x_{3}(t)} \land
\lnon{x_{4}(t)}).
\end{eqnarray*}
The BN for the second block is constructed by
\begin{eqnarray*}
x_{5}(t+1) & = & x_{6}(t) \lor (\lnon{x_{5}(t)} \land \lnon{x_{7}(t)} \land
\lnon{x_{8}(t)}),\\
x_{6}(t+1) & = & x_{7}(t) \lor (\lnon{x_{5}(t)} \land \lnon{x_{6}(t)} \land
\lnon{x_{8}(t)}),\\
x_{7}(t+1) & = & x_{8}(t) \lor (\lnon{x_{5}(t)} \land \lnon{x_{6}(t)} \land
\lnon{x_{7}(t)}),\\
x_{8}(t+1) & = & x_{5}(t) \lor (\lnon{x_{6}(t)} \land \lnon{x_{7}(t)} \land
\lnon{x_{8}(t)}).
\end{eqnarray*}
The BN for the last block is constructed by
\begin{eqnarray*}
x_{9}(t+1) & = & x_{10}(t) \lor (\lnon{x_{9}(t)} \land \lnon{x_{11}(t)} \land
x_1(t)),\\
x_{10}(t+1) & = & x_{11}(t) \lor (\lnon{x_{9}(t)} \land \lnon{x_{10}(t)} \land
x_1(t)),\\
x_{11}(t+1) & = & x_{9}(t) \lor (\lnon{x_{10}(t)} \land \lnon{x_{11}(t)} \land
x_1(t)).
\end{eqnarray*}
And, the observation nodes are defined by
$[y_1(t),y_2(t),y_3(t)]=[x_1(t),x_5(t),x_9(t)]$.
\end{example}

\begin{remark}
Note that the upper bound of the number of observation nodes for $K$-NC-BNs obtained in Theorem \ref{theorem5} can break the lower bound of the number of observation nodes for $K$-AND-OR-BNs obtained in Theorem \ref{theorem2}.
\end{remark}

\begin{proposition}\label{proposition}
For any $K$-NC-BN, $m\geq(1-\beta_{K})n$ must hold so that the BN is observable, where $\beta_{K}=-\left[\frac{2^{K-1}-1}{2^K}\log_{2}\left(\frac{2^{K-1}-1}{2^{K}}\right)+\frac{2^{K-1}+1}{2^K}\log_{2}\left(\frac{2^{K-1}+1}{2^{K}}\right)\right]$.
\end{proposition}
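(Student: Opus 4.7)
The plan is to follow the information-theoretic template of Theorems~\ref{theorem1} and \ref{theorem2}, replacing the per-node entropy bound by one tailored to nested canalyzing functions. As in those proofs, I take $\xvec(0)$ uniform on $\{0,1\}^{n}$ so that every $f_{i}$ sees independent, uniform inputs. The goal is to prove the sharp bound $H(x_{i}(1))\leq\beta_{K}$ for each $i$ and then run the same accounting as in Theorem~\ref{theorem2}.

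The first (and main) step is a structural lemma: for any nested canalyzing function $f$ of $K$ literals on distinct variables, with uniform independent inputs, $\Pr[f=1]=c/2^{K}$ for some \emph{odd} integer $c$. The argument is by induction on $K$. By the nested canalyzing definition in the paper, any such $f$ can be written as $f=\ell\vee g$ or $f=\ell\wedge g$, where $\ell$ is a literal on a single variable and $g$ is itself an NC function on the remaining $K-1$ variables (if the outermost block of $f$ contains several literals of the same connective, we simply peel off one of them and absorb the rest into $g$). Independence of $\ell$ from $g$'s inputs yields $\Pr[f=1]=\tfrac{1}{2}+\tfrac{1}{2}\Pr[g=1]$ in the $\vee$-case and $\Pr[f=1]=\tfrac{1}{2}\Pr[g=1]$ in the $\wedge$-case. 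The base $K=1$ gives $\Pr[\ell=1]=1/2$, whose numerator is odd; for $K\geq 2$, since $2^{K-1}$ is even and $\Pr[g=1]=c'/2^{K-1}$ with $c'$ odd by induction, both $2^{K-1}+c'$ and $c'$ are odd, so $\Pr[f=1]$ is an odd multiple of $1/2^{K}$.

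From the lemma, $c$ odd and $2^{K-1}$ even (for $K\geq 2$) force $|c-2^{K-1}|\geq 1$, hence
\[
\Pr[f=1]\ \in\ \left[0,\tfrac{2^{K-1}-1}{2^{K}}\right]\ \cup\ \left[\tfrac{2^{K-1}+1}{2^{K}},1\right].
\]
The binary entropy $h(p)=-p\log_{2}p-(1-p)\log_{2}(1-p)$ is strictly concave, symmetric about $1/2$, and uniquely maximized at $p=1/2$; over the feasible region above, its supremum is attained at the inner endpoints $p=(2^{K-1}\pm 1)/2^{K}$, which by symmetry both give the same value, namely $\beta_{K}$. Hence $H(x_{i}(1))\leq\beta_{K}$ for every $i$.

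Finally, by subadditivity of entropy, $H(\xvec(1))\leq\sum_{i=1}^{n}H(x_{i}(1))\leq\beta_{K}n$, so the time series $[\yvec(1),\ldots,\yvec(N)]$ carries at most $\beta_{K}n$ bits of information about $\xvec(0)$. Since distinguishing all $2^{n}$ initial states requires $n$ bits, the remaining $(1-\beta_{K})n$ bits must come from $\yvec(0)$, which forces $m\geq(1-\beta_{K})n$. The hard part is the structural lemma; once the ``peel one literal'' decomposition is justified for every function admitted by the paper's definition (including the case where the outermost block contains several literals joined by the same connective), the odd-numerator induction and the entropy bookkeeping proceed routinely.
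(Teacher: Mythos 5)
Your proof is correct and follows the same information-entropy template as the paper's: bound the per-node entropy of $x_{i}(1)$ by $\beta_{K}$ under a uniform initial state, sum over nodes, and conclude that $\yvec(0)$ must supply the missing $(1-\beta_{K})n$ bits, exactly as in Theorems~\ref{theorem1} and \ref{theorem2}. The one place where you genuinely diverge is how the per-node bound is obtained. The paper simply invokes Theorem 23 of \cite{guo22}, which asserts that a canalyzing function of $K$ relevant literals attains one of its two output values on at most $2^{K-1}-1$ of the $2^{K}$ inputs, so the entropy of $x_{i}(1)$ is at most $\beta_{K}$. You instead prove the needed inequality from scratch for nested canalyzing functions via the odd-numerator lemma: peeling one literal off the outermost block writes $f$ as $\ell\vee g$ or $\ell\wedge g$ with $g$ nested canalyzing on the remaining $K-1$ variables and independent of $\ell$, giving $\Pr[f=1]=\frac{2^{K-1}+c'}{2^{K}}$ or $\frac{c'}{2^{K}}$ with $c'$ odd by induction, hence $\Pr[f=1]$ has odd numerator over $2^{K}$ and differs from $\frac{1}{2}$ by at least $2^{-K}$; concavity of binary entropy then yields the bound $\beta_{K}$ at the inner endpoints. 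This is a clean and correct argument --- the paper's definition of nested canalyzing functions does permit peeling the outermost block one literal at a time down to the single-literal base case --- and it has the advantage of being self-contained, whereas the cited result in \cite{guo22} is more general (all canalyzing functions, not only nested ones) but is used here as a black box. Your final accounting via subadditivity and data processing matches the paper's step, including its somewhat informal character, so the two proofs stand or fall together on that point.
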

\begin{proof}
It is seen from Theorem 23 of \cite{guo22} that for any canalyzing function of $K$ literals, the entropy of $x_{1}(1)$ is maximized when the $x_{1}(1)=0$ and $x_{1}(1)=1$ for $2^{K-1}-1$ and $2^{K-1}+1$ inputs, respectively. In other words,
the entropy of $x_{1}(1)$ is no greater than $\beta_{K}$, where
\begin{eqnarray*}
\beta_{K}=-\left[\frac{2^{K-1}-1}{2^K}\log_{2}\left(\frac{2^{K-1}-1}{2^{K}}\right)+\frac{2^{K-1}+1}{2^K}\log_{2}\left(\frac{2^{K-1}+1}{2^{K}}\right)\right].
\end{eqnarray*}

In total, the information
quantify of $\xvec(1)$ is at most $n\beta_{K}$ bits, which might be obtained by $[\yvec(0),\ldots,\yvec(N)]$. However, in order to observe any initial state $\xvec(0)$, we should have the information quantity of $n$ bits. Therefore, at least $n-n\beta_{K}=(1-\beta_{K})n$ bits must be provided from $\yvec(0)$, which implies $m\geq(1-\beta_{K})n$.
\end{proof}

\begin{remark}
Comparing the coefficient $\frac{1}{K}$ in Theorem \ref{theorem5} with the coefficient $1-\beta_{K}$ in Proposition \ref{proposition}, we notice that $\lceil \frac{n}{K} \rceil>(1-\beta_{K})n$ always holds regardless of $n$, and we illustrate it as follows (Fig~\ref{2}). It is seen from Theorem 23 of \cite{guo22} that for sufficiently large $K$, $\beta_{K}\approx1-\frac{1}{\ln2}\left(\frac{1}{2^{2K-1}}\right)$.
Hence, $\frac{1}{K}>1-\beta_{K}$, which is reasonable from the definitions
of upper and lower bounds.
However, as seen from Fig~\ref{2},
the gap between $\lceil \frac{n}{K} \rceil$ and $(1-\beta_{K})n$ is not small,
in particular compared with that on Fig~\ref{1},
Hence, further studies should be done for closing the gap.
It should also be noted that the curves in Fig.~\ref{1} are increasing
whereas those in Fig.~\ref{2} are decreasing.
However, there is no inconsistency and this observation suggest that
there exist big differences on the observability between
$K$-AND-OR-BNs and $K$-NC-BNs.
\begin{figure}[htb]
\centering
\includegraphics[width=0.6\textwidth]{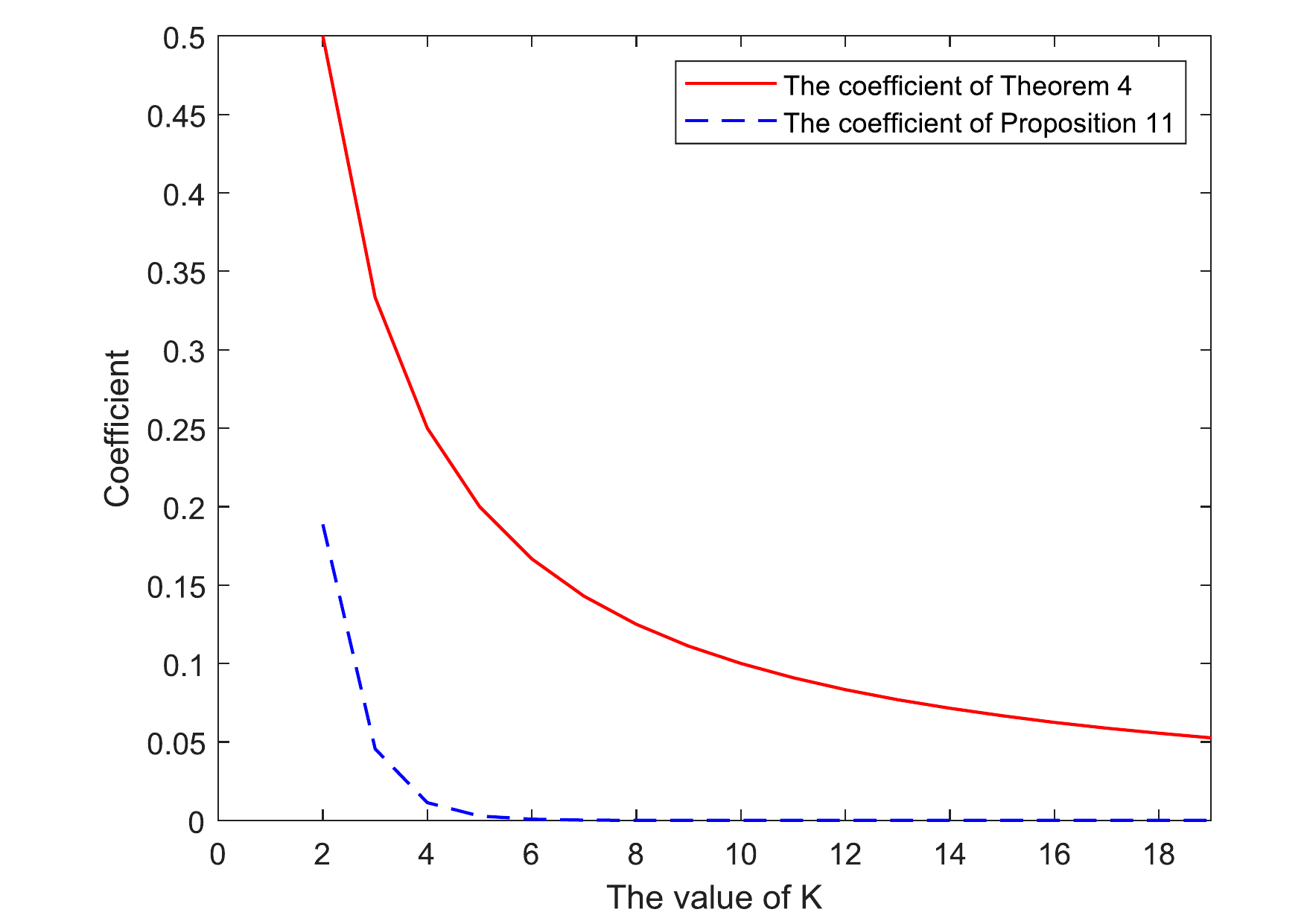}
\caption{Comparison between the coefficient $\frac{1}{K}$ in Theorem \ref{theorem5} with the coefficient $1-\beta_{K}$ in Proposition \ref{proposition}. Here, we find that $\frac{1}{K}>1-\beta_{K}$.}
\label{2}
\end{figure}
\end{remark}
\section{Conclusion}
The number of observation nodes plays a crucial role in the observability of BNs. In this paper, we mainly studied the upper and lower bounds of the number of observation nodes for three types of BNs. In other words, we analyzed and obtained the general lower bound, best case upper bound, worst case lower bound, and general upper bound for $K$-AND-OR-BNs, $K$-XOR-BNs, and $K$-NC-BNs, respectively.
First, for $K$-AND-OR-BNs, a novel technique using information entropy to derive a lower bound of the number of observation nodes was proposed, where $\left[(1-K)+\frac{2^{K}-1}{2^{K}}\log_{2}(2^{K}-1)\right]n$ was inferred as a general lower bound. Meanwhile, the best case upper bound $\left(\frac{2^{K}-K-1}{2^{K}-1}\right)n$ for $K$-AND-OR-BNs was found. Then, for $K$-XOR-BNs, we found that there exists a $K$-XOR-BN that is observable with $m=K$, and that there exists a $K$-XOR-BN that is observable when $m=\frac{K}{K+1}n$. In addition, the best case upper bound for $K$-XOR-BNs was $1$. Finally, we showed that the general lower bound for $K$-NC-BNs is $(1-\beta_{K})n$ and that there exists a specific $K$-NC-BN, which is observable with $m=\lceil \frac{n}{K}\rceil$.
Here, we found that the upper bound for $K$-NC-BNs could break the lower bound for $K$-AND-OR-BNs, which implies that some $K$-NC-BNs are easier to observe than any $K$-AND-OR-BNs.

For any BNs, we also developed two new techniques for inferring the general lower bounds of the number of observation nodes, namely using counting identical states $\xvec(1;\xvec(0))$ and counting the number of fixed points. Generally speaking, by counting identical states $\xvec(1;\xvec(0))$, a general lower bound $\log_{2}[\max_{\avec^{j}}COUNT(\avec^{j})]$, where $COUNT(\avec^{j})$ represents the total number of occurrences of $\avec^{i}$ in $\{\xvec(1;\xvec(0)),\xvec(0)\in\{0,1\}^{n}\}$, was derived; by counting the number of fixed points, a general lower bound $\log_{2}l$, where $l$ represents the number of fixed points in a BN, was derived.

In summary, through combinatorial analysis of several types of BNs, we derived nontrivial lower and upper bounds of the number of observation nodes.
However, there exist gaps between the best case upper bounds and the general lower bounds.
In particular, the gap between Theorem \ref{theorem5} and
Proposition \ref{proposition} is not small.
Therefore, closing gaps is left as interesting open problems.

\end{document}